\def\R{{\mathds R}}
\def\C{{\mathds C}}
\def\e{\mathrm{e}}
\def\E{\mathrm{E}}
\def\VAR{\mathrm{VAR}}
\newcommand{\be}{\begin{equation}}
\newcommand{\ee}{\end{equation}}
\newcommand{\bzero}{{\mbox{\boldmath $0$}}}
\newcommand{\bI}{{\mbox{\boldmath $I$}}}
\newcommand{\bz}{{\mbox{\boldmath $z$}}}
\newcommand{\bn}{{\mbox{\boldmath $n$}}}
\newcommand{\bv}{{\mbox{\boldmath $v$}}}
\newcommand{\bp}{{\mbox{\boldmath $p$}}}
\newcommand{\bC}{{\mbox{\boldmath $C$}}}
\newcommand{\bZ}{{\mbox{\boldmath $Z$}}}
\newcommand{\bR}{{\mbox{\boldmath $R$}}}
\newcommand{\bS}{{\mbox{\boldmath $S$}}}
\newcommand{\bP}{{\mbox{\boldmath $P$}}}
\newcommand{\Pvtildeperp}{{\mbox{\boldmath $\bP^\perp_{\tilde{\bv}}$}}}
\newtheorem{proposition}{Proposition}
\newtheorem{insight}{Insight}
\newcommand{\test}{\mbox{$
\begin{array}{c}
\stackrel{ \stackrel{\textstyle H_1}{\textstyle >} }{ 
\stackrel{\textstyle <}{ \textstyle  H_0} }

\end{array}
$}}
\begin{document}

\title{
CFAR Feature Plane: a Novel Framework for the Analysis and Design of Radar Detectors
}

\author{Angelo Coluccia, \IEEEmembership{Senior Member, IEEE}, Alessio Fascista, \IEEEmembership{Member, IEEE}, and Giuseppe Ricci, \IEEEmembership{Senior Member, IEEE}
\thanks{All authors are with the Dipartimento di Ingegneria dell'Innovazione,
        Universit\`a del Salento, Via Monteroni, 73100 Lecce, Italy.
        E-Mail: name.surname@unisalento.it}
}

\maketitle

\begin{abstract}
Since Kelly's pioneering work on GLRT-based adaptive detection, many solutions  have been proposed to enhance either  selectivity or  robustness of radar detectors to mismatched signals. In this paper such a problem is addressed in a different space, called CFAR feature plane  and given by a suitable maximal invariant, where observed data are mapped to clusters that can be analytically described. The characterization of the trajectories and shapes of such clusters is provided and exploited for both analysis and design purposes, also shedding new light on the behavior of several well-known detectors. Novel linear and non-linear detectors are proposed with diversified robust or selective behaviors, showing that through the proposed framework it is not only possible to achieve the same performance of well-known receivers obtained by a radically different design approach (namely GLRT), but also to devise detectors with unprecedented behaviors: in particular, our results show that the highest standard of selectivity can be achieved without sacrifying neither detection power under matched conditions nor CFAR property.
\end{abstract}

\begin{IEEEkeywords}
Radar, GLRT, CFAR property, robust detectors, selective detectors, mismatched signals, feature space
\end{IEEEkeywords}

\section{Introduction}\label{sec:introduction}

The well-known  problem of detecting the possible presence of a (point-like) target's coherent return from a given cell under test (CUT) in range, doppler, and azimuth, is classically formulated as the following binary hypothesis testing problem:
\begin{equation}
\left\{
\begin{array}{ll}
H_{0}: & \bz =  \bn \\
H_{1}: & \bz = \alpha \bv + \bn
\end{array} 
\right. \label{eq:binary_test}
\end{equation}
where
$\bz \in \C^{N \times 1}$, $\bn \in \C^{N \times 1}$, and $\bv \in \C^{N \times 1}$ denote 
the received vector, the corresponding noise term, and 
the known space-time steering vector of the useful target echo, respectively, while $\alpha \in \C$ is an unknown deterministic parameter depending on radar cross-section, multipath, and other channel effects.
In general $N$ is the number of processed samples from the CUT; it might be the number of antenna array elements times the number of pulses \cite{Ward,Klemm-STAP}.

Kelly \cite{Kelly} derived a generalized likelihood ratio test (GLRT) for problem \eqref{eq:binary_test} assuming complex normal distributed noise with zero mean and unknown (Hermitian) positive definite covariance matrix $\bC$, denoted by ${\cal CN}_N (\bzero, \bC)$, and $K \geq N$ independent and identically distributed training (or secondary) data $\bz_1, \ldots, \bz_{K}$ (independent of $\bz$, free of target echoes, and  sharing with the CUT the statistical characteristics of the noise).
Remarkably, the resulting detection statistic does not depend upon any unknown parameter under the $H_0$ hypothesis, hence Kelly's detector 
 has the constant false alarm rate (CFAR) property, which is very important in practice to ensure that the detection threshold can be set to achieve a chosen probability of false alarm ($P_{fa}$) irrespective of the clutter statistics. %
 In  \cite{Kelly89} the performance of Kelly's detector is assessed when the actual
steering vector is not aligned with the nominal one.
The analysis shows that it is a selective receiver, i.e., it tends to reject signals arriving from directions  different from the nominal one. A selective detector is desirable for target localization; instead, a certain level of robustness to mismatches is preferable when the radar is working in searching mode. A prominent example of robust receiver is the adaptive matched filter (AMF) derived by Robey et al. in \cite{Kelly-Nitzberg} as two-step GLRT, i.e., by first assuming that the covariance matrix is known and then replacing it by the sample covariance based on secondary data in the detection statistic.

Following such pioneering papers, many  authors have addressed the problem of enhancing either the selectivity or the robustness to  mismatches, while ensuring the CFAR property.
Typical design procedures include statistical tests with modified hypotheses, asymptotic arguments, approximations, and ad-hoc strategies, which have led to a plethora of different detectors (several of them will be analyzed in Sec. \ref{sec:receivers}). 

In this paper we have a new look at the problem of designing and analyzing CFAR detectors for \eqref{eq:binary_test} with desired robust or selective behavior. The idea is to visualize observation data as point clouds forming \emph{clusters} in a suitable plane, with the aim of separating the $H_0$ cluster from the union of the matched and mismatched $H_1$ clusters, in case robustness is of interest, or   the union of the $H_0$ and mismatched-$H_1$ clusters from the matched-$H_1$ cluster, in case selectivity is desired.
We will show that the decision region boundary in such a space, hereafter referred to as CFAR feature plane (CFAR-FP), can be easily derived and interpreted as a linear or non-linear classifier, shedding new light on the behavior of several well-known detectors. Besides, a main contribution of the paper is the analytical characterization of the trajectories and shapes of the clusters parameterized by the signal-to-noise ratio (SNR) and mismatch on the steering vector, which turns out to be very useful for both analysis and design purposes.
As a second contribution we provide different criteria for obtaining detectors with desired behaviors. In particular, we discuss a tunable detector based on the likelihood ratio test in the CFAR-FP, showing that it behaves as a robust or selective receiver based on the setting of a design parameter, and outperforms classical competitors; moreover, we discuss how to design novel linear and non-linear detectors aimed at promoting either robustness or selectivity, while keeping satisfactory performance under matched conditions.

Our rationale is rooted in the theory of invariant detection, which puts forward the  idea of focusing on detectors that are invariant under transformations that leave the decision problem unchanged. %
In \cite{Bose} the group of such admissible transformations  is characterized for problem \eqref{eq:binary_test} and it is proved that the corresponding invariant tests can be expressed in terms of a \emph{maximal invariant} given by any pair of statistics with 1-1 relation to Kelly's and AMF statistics; thus, invariant tests also possess the CFAR property. 
The statistical characterization of the maximal invariant has been used over the years to derive $P_{fa}$ and probability of detection ($P_d$) formulae for several detectors, also under mismatched conditions. In \cite{Pulsone-Rader} a thorough way to inspect the behavior of invariant detectors is identified in the contours of constant $P_d$ in the SNR-$\cos^2 \theta$ plane (so-called mesa plots), since their performances only depend  on the SNR and cosine squared of the angle $\theta$ between the nominal and mismatched steering vectors.
In \cite{Bose2} a theoretical bound on the achievable performance of invariant detectors (in terms of $P_d$ under matched conditions)  is obtained by considering the (non-implementable) maximum power invariant (MPI) test, analyzed also in a plane given by the marginal cumulative distribution functions (CDFs) of a maximal invariant. In the CFAR-FP framework we propose here, the problem is instead addressed in a plane given by the maximal invariant directly, not marginal CDFs of maximal invariants (as in \cite{Bose2}) nor performance parameters SNR-$\cos^2 \theta$ (as in mesa plots).  This also enables a more intuitive interpretation under matched/mismatched conditions using general concepts adopted e.g. in machine learning based classification (data clusters,  decision region boundary, and linear/non-linear classifier); as a result, new insights about a number of well-known receivers are provided, exploiting the analytical characterization of the trajectories and shapes of the clusters (parameterized by SNR and $\cos^2 \theta$) we derive in the paper.
On the design side, one of the detectors we propose is an ad-hoc tunable version of the MPI test in \cite{Bose2}; in particular, we will show that while in \cite{Bose2} robustness/selectivity aspects  are not considered, our solution can exhibit a robust or selective behavior with satisfactory performance under matched conditions.
Additionally, we discuss different criteria to design a radar detector in the CFAR-FP with desired behavior; this is a new way of thinking, quite different from the traditional hypothesis testing approach, and again leverages the characterization of the cluster trajectories we provide in this work. Through the proposed framework it is possible to achieve the same performance of some well-known receivers obtained by a radically different design approach (namely GLRT) but also remarkable unprecedented behaviors: in particular, ours is the first design tool able to deliver a CFAR detector with very strong selectivity but no $P_d$ loss under matched conditions.

The rest of the paper is organized as follows. Sec. \ref{sec:receivers} is devoted to the analysis of several well-known detectors in the CFAR-FP and to the proposed tunable detector. In Sec. \ref{sec:characterization} the statistical characterization of clusters in the CFAR-FP is developed, while Sec. \ref{sec:perf_ass} focuses on the design and performance assessment. We conclude in Section \ref{sec:conclusions}.

\section{Novel analysis framework for CFAR detectors}\label{sec:receivers}

\begin{table*}
\caption{Classification region boundary in the CFAR-FP ($\beta$-$\tilde{t}$ plane)}
\begin{center}
\begin{tabular}{c|c}
\hline\hline\\[-0.2cm]
Name & Curve equation (either explicit $\tilde{t}=f(\beta)$ or implicit $f(\beta,\tilde{t})=0$)  \\[0.1cm]
\hline\\[-0.2cm]
Kelly's detector \cite{Kelly} & $\tilde{t} = \eta_\text{\tiny K}$\\[0.2cm]
adaptive matched filter (AMF) \cite{Kelly-Nitzberg} & $\tilde{t} =  \eta_\text{\tiny AMF} \beta$\\[0.2cm]
adaptive coherence detector (ACE)  \cite{ACE} & $\tilde{t} =  - \frac{\eta_\text{\tiny ACE}}{1-\eta_\text{\tiny ACE}} \beta + \frac{\eta_\text{\tiny ACE}}{1-\eta_\text{\tiny ACE}}$\\[0.2cm]
energy detector (ED) & $\tilde{t} =  (\eta_\text{\tiny ED}+1) \beta -1$\\[0.2cm]
Kalson's detector \cite{Kalson} & $\tilde{t} =  \frac{ (1-\epsilon_\text{\tiny Kalson})\eta_\text{\tiny Kalson}}{1-\epsilon_\text{\tiny Kalson} \eta_\text{\tiny Kalson}} \beta + \frac{\epsilon_\text{\tiny Kalson} \eta_\text{\tiny Kalson}}{1-\epsilon_\text{\tiny Kalson} \eta_\text{\tiny Kalson}}$ \,\, $(0 \leq \epsilon_\text{\tiny Kalson} \leq 1)$ \\[0.2cm]
adaptive beamformer orthogonal rejection test (ABORT)  \cite{Pulsone-Rader} & $\tilde{t} =  - \beta +\frac{\eta_\text{\tiny A}}{1-\eta_\text{\tiny A}}$\\[0.2cm]
\hline\\[-0.1cm]
whitened-ABORT detector (WABORT) \cite{W-ABORT} & $\tilde{t} = \frac{\eta_\text{\tiny WA}}{\beta} - 1$\\[0.2cm]
parametric Kelly-WABORT detector (KWA) \cite{KWA} & $\tilde{t} = \frac{\eta_\text{\tiny KWA}}{\beta^{2\epsilon_\text{\tiny KWA}-1}} - 1$ \,\, $(\epsilon_\text{\tiny KWA} >0)$ \\[0.2cm]
Rao's test \cite{Rao} & $\tilde{t} = \frac{\eta_\text{\tiny Rao}}{\beta-\eta_\text{\tiny Rao}} $\\[0.2cm]
\hline\\[-0.1cm]
conic acceptance detector (CAD) \cite{BOR-MorganClaypool}  & $\frac{1-\beta+\tilde{t}}{\beta} \! - \! \frac{1}{1+\epsilon_\text{\tiny CAD}^2} \left[ \sqrt{\frac{1}{\beta}-1} -\epsilon_\text{\tiny CAD} \sqrt{\frac{\tilde{t}}{\beta}} \right]^2 \!\! u\left( \frac{1-\beta+\tilde{t}}{\beta} -\frac{\tilde{t}}{\beta} (1+\epsilon_\text{\tiny CAD}^2) \right) - \eta_\text{\tiny CAD} =0$  \\
& $(\epsilon_\text{\tiny CAD} >0)$, where $u(x)=\left\{ \begin{array}{ll} 1 & x\geq 0\\ 0 & x<0 \end{array} \right.$ is the Heaviside step function\\[-0.3cm]
\\[0.2cm]
conic acceptance-rejection detector (CARD)  \cite{Bandiera-DeMaio-Ricci-CONI,BOR-MorganClaypool} & $\left[ \epsilon_\text{\tiny CARD}\sqrt{\frac{\tilde{t}}{\beta}} -  \sqrt{\frac{1}{\beta}-1} \right]^2 \mathrm{sgn} \left( \epsilon_\text{\tiny CARD} \sqrt{\frac{\tilde{t}}{\beta}}  -  \sqrt{\frac{1}{\beta}-1} \right) - \eta_\text{\tiny CARD} =0$  \\
& $ (\epsilon_\text{\tiny CARD} >0)$, where $ \mathrm{sgn}(x) =\left\{ \begin{array}{ll} +1 & x\geq 0\\ -1 & x<0 \end{array} \right.$ is the sign function\\[0.4cm]
random-signal robustified detector (ROB) \cite{arxivROB} & $\left\{
\begin{array}{ll}
\tilde{t}
 =  \frac{\eta_\text{\tiny ROB}}{ 1-\frac{1}{\zeta_{\epsilon}} } \beta \left[ \left(\zeta_{\epsilon} -1 \right)
\left( \frac{1}{\beta}-1\right) \right]^{\frac{1}{\zeta_{\epsilon}}} -1, &
\beta \in (0, 1-\frac{1}{\zeta_{\epsilon}}] \\[0.2cm]
\tilde{t} = \eta_\text{\tiny ROB} -1, & \beta \in [1-\frac{1}{\zeta_{\epsilon}},1)
\end{array}
\right. 
$\\
& where $\zeta_{\epsilon}=\frac{K+1}{N}(1+\epsilon_\text{\tiny ROB})$ with $\epsilon_\text{\tiny ROB} \geq 0$\\[0.3cm]
\hline\\[-0.1cm]
natural parametric detector  (NAT) in Sec. \ref{sec:NAT} & $\mathrm{e}^{-\frac{\epsilon_\text{\tiny NAT} \beta}{1+\tilde{t}}}  \sum_{h=0}^{K-N+1} a_h(\epsilon_\text{\tiny NAT}) \left( \frac{\beta \tilde{t}}{1+\tilde{t}} \right)^h - \eta_\text{\tiny NAT} = 0$\\[0.2cm]
& where $a_h(\epsilon_\text{\tiny NAT})=\frac{\binom{K-N+1}{h}}{h!} \epsilon_\text{\tiny NAT}^h$ with $\epsilon_\text{\tiny NAT} \geq 0$ \\[0.3cm]
\hline\hline
\end{tabular}
\end{center}
\label{tabella}
\end{table*}%

\subsection{CFAR processing chain and feature plane}

As mentioned, over the past decades different ideas have been considered, based on statistical tests with modified hypotheses, asymptotic arguments, approximations, and ad-hoc strategies, to promote robustness or selectivity in radar detectors.
Proposition \ref{teo:FS} introduces the proposed CFAR-FP framework, in which invariant detectors are seen as learning machines that guarantee the CFAR property through a suitable processing chain. The equations describing their corresponding decision region boundary are obtained for several well-known detectors, showing that they have an intuitive interpretation in the CFAR-FP as linear or non-linear  classifiers, so shedding new light on their robust/selective behaviors and trade-offs.

\begin{proposition}\label{teo:FS}
Define the mapping chain
\begin{equation}
\{\bz, \bZ =[\bz_1 \cdots \bz_K]\} \mapsto (\bz, \bS^{-1}) \mapsto (s_1,s_2) \mapsto (\beta, \tilde{t}) \label{eq:FS_mapping}
\end{equation}
where $(s_1,s_2)=(\bz^{\dagger} \bS^{-1} \bz, \frac{|\bz^{\dagger} \bS^{-1} \bv|^{2}}{\bv^{\dagger} \bS^{-1} \bv})$, $\bS=\bZ\bZ^\dag$ denotes the scatter matrix (i.e., $K$ times the sample covariance), and
\be
(\beta, \tilde{t}) = \left(\frac{1}{1+ s_1  - s_2}, \frac{s_2 }{1+ s_1 - s_2} \right) \in (0,1)\times \R_+
\label{eq:FSpair}
\ee
with $\R_+ = (0, +\infty)$ and ${}^\dag$ conjugate transpose (Hermitian). All detectors built through the processing chain \eqref{eq:FS_mapping} are linear or non-linear classifiers in the $\beta$-$\tilde{t}$ CFAR-FP and possess the CFAR property; 
the decision region boundary equations of several well-known detectors are summarized in Table \ref{tabella}.
\end{proposition}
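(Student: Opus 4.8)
The plan is to prove the three assertions in turn: bijectivity of the final map in \eqref{eq:FS_mapping}, the maximal invariance of $(\beta,\tilde{t})$ (from which the CFAR property follows), and the per-detector boundary equations in Table \ref{tabella}. First I would invert \eqref{eq:FSpair} explicitly: solving for the intermediate pair gives $s_2=\tilde{t}/\beta$ and $s_1=(1+\tilde{t}-\beta)/\beta$, so the map is one-to-one wherever defined. To pin down the ranges I would invoke the Cauchy--Schwarz inequality in the inner product induced by the positive definite matrix $\bS^{-1}$, namely $|\bz^\dag\bS^{-1}\bv|^2\le(\bz^\dag\bS^{-1}\bz)(\bv^\dag\bS^{-1}\bv)$, which reads $0\le s_2\le s_1$. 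Hence $1+s_1-s_2\ge 1>0$, so $\beta\in(0,1)$ and $\tilde{t}=s_2/(1+s_1-s_2)\in\R_+$, while the inverse formulas show the map is onto; thus $(\beta,\tilde{t})$ carries exactly the same information as $(s_1,s_2)$.

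Next I would obtain the CFAR property by linking $(\beta,\tilde{t})$ to the maximal invariant of \cite{Bose}. Writing Kelly's statistic as $s_2/(1+s_1)$ and the AMF statistic as $s_2$ and substituting the inverse relations yields $t_{\mathrm{K}}=\tilde{t}/(1+\tilde{t})$ and $t_{\mathrm{AMF}}=\tilde{t}/\beta$; these two identities are jointly invertible, so $(\beta,\tilde{t})$ is in one-to-one correspondence with the pair $(t_{\mathrm{K}},t_{\mathrm{AMF}})$, which by \cite{Bose} is a maximal invariant for \eqref{eq:binary_test}. Consequently $(\beta,\tilde{t})$ is itself a maximal invariant, its distribution under $H_0$ does not depend on the unknown $\bC$, and every decision rule that is a function of $(\beta,\tilde{t})$---that is, every detector produced by the chain \eqref{eq:FS_mapping}---inherits the CFAR property while its threshold defines a decision boundary, i.e. a linear or non-linear classifier, in the plane.

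Finally, for each row of Table \ref{tabella} I would substitute $s_1=(1+\tilde{t}-\beta)/\beta$ and $s_2=\tilde{t}/\beta$ into the detector's known statistic and equate it to its threshold. For the affine detectors (Kelly; AMF; ACE, whose statistic is $s_2/s_1$; the energy detector, $s_1$; Kalson; ABORT) this returns a straight line directly; for the remaining rows (WABORT, KWA, Rao---whose statistic equals $s_2/[(1+s_1)(1+s_1-s_2)]$---CAD, CARD, ROB and the proposed NAT) the same substitution produces the tabulated non-linear curves. I expect the main obstacle to be the algebra for this latter group: one must carry the cone-membership constraints of the original statistics so that the Heaviside and sign terms, together with the piecewise branches and their break points (such as $\beta=1-1/\zeta_{\epsilon}$ for ROB), emerge exactly as listed. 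By contrast, bijectivity and CFAR are immediate once the bound $s_2\le s_1$ and the identification with Bose's maximal invariant are established.
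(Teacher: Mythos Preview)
Your proposal is correct and follows essentially the same route as the paper: establish that $(\beta,\tilde{t})$ is in one-to-one correspondence with the Bose--Steinhardt maximal invariant $(t_{\text{\tiny Kelly}},t_{\text{\tiny AMF}})$ to obtain the CFAR property, then substitute the inverse relations $s_2=\tilde{t}/\beta$, $s_1=(1+\tilde{t}-\beta)/\beta$ into each detector's statistic and equate to threshold to recover the entries of Table~\ref{tabella}. Your addition of the Cauchy--Schwarz bound to justify $\beta\in(0,1)$ and the explicit inversion to verify bijectivity are welcome clarifications that the paper leaves implicit, but the underlying argument is the same.
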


\begin{proof}
The first part of the statement is just a reparametrization of the relationship between the maximal invariants $(t_\text{\tiny Kelly},t_\text{\tiny AMF})$ \cite{Bose} and $(\beta, \tilde{t})$, the latter being the statistical representation often used to derive $P_{fa}$ and $P_d$ formulae for several well-known detectors \cite{BOR-MorganClaypool}, hence the CFAR property follows. In particular, $\tilde{t} = \frac{t_\text{\tiny Kelly}}{1-t_\text{\tiny Kelly}}$ where $t_{\text{\tiny{Kelly}}}$ is Kelly's statistics (ref. eq.  \eqref{eq:Kelly}), and $\beta= (1+\bz^{\dagger} \bS^{-1} \bz  - \frac{|\bz^{\dagger} \bS^{-1} \bv|^{2}}{\bv^{\dagger} \bS^{-1} \bv})^{-1}$. The second part of the statement is obtained, for each given detector $X$, by  rewriting its detection statistic $t_X$ in terms of $\beta$ and $\tilde{t}$ only, and then studying the equation $t_X(\beta, \tilde{t}) = \eta_X$, where $\eta_X$ is the threshold for a chosen  $P_{fa}$; $t_X$ may also depend on a tunable parameter $\epsilon_X$. In many cases this equation can be solved   explicitly in  $\tilde{t}$ as a linear or non-linear function of $\beta$, ref. Table \ref{tabella} (the calculations are simple hence omitted, but some details can be found in the analysis below).
\end{proof}

Before discussing insights resulting from Proposition \ref{teo:FS}, some comments are in order. 
First, the choice $(\beta, \tilde{t})$ is mathematically convenient, since the two variables are independent under $H_0$, with well-known distributions \cite{Kelly,Kelly89,BOR-MorganClaypool}, and one of them is equivalent to Kelly's detector that is considered the benchmark for matched conditions (more on this point later).
Other maximal invariants might be adopted, e.g. $(t_\text{\tiny Kelly},t_\text{\tiny AMF})$, $(s_1,s_2) = (t_\text{\tiny ED},t_\text{\tiny AMF})$ (ref. eqs. \eqref{eq:ED} and \eqref{eq:AMF}), or the one in \cite{Bose2}; reasons for our choice will be highlighted in the sequel.

Second, the mapping chain \eqref{eq:FS_mapping} can be interpreted, in the spirit of \cite{Haykin},  as  layers of a learning machine\footnote{In general a learning machine is a weighted interconnection of processing units, e.g. a neural network, with many degrees of freedom to be determined.} with a peculiar structure. This provides a general multi-layer scheme for a processing chain that guarantees the CFAR property, as depicted in Fig. \ref{fig:network}. After the input layer with the raw data $\{\bz, \bz_1, \ldots, \bz_K\}$, the first two hidden layers serve for what is called in \cite{Haykin} an \emph{invariant feature extractor}, i.e., the input data undergo two stages of compression. 
This processing can be interpreted as follows: given the constraint of CFAR property guarantee, the first hidden layer rules out any other uses of the secondary data different from the construction of the sample matrix inverse $\bS^{-1}$, while the second hidden layer rules out any data compression function different from a transformation that depends on the  energy (norm) of the received signal vector ($s_1$) and its scalar product with the energy-normalized steering vector ($s_2$). 
The third hidden layer is instead  an invertible mapping to obtain a more convenient mathematical representation in terms of the $(\beta, \tilde{t})$ variables. The latter capture invariant properties of the data and, on the other hand, help to restrict the architecture of the learning machine; accordingly, here they will be referred to  as  \emph{features}, as customary in the machine learning terminology.  

\begin{figure}
\centering
\includegraphics[width=7.5cm]{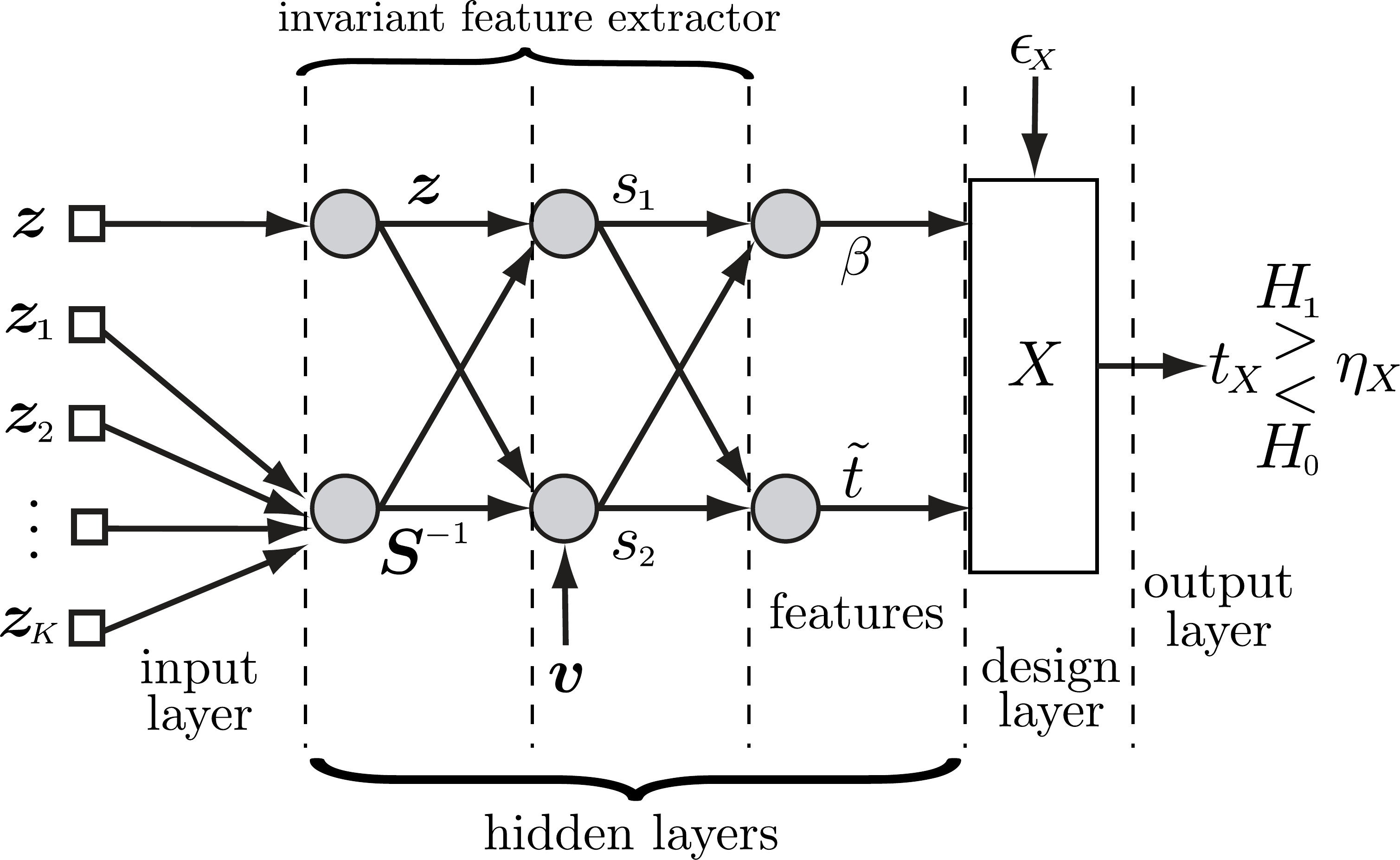}
\caption{Layered representation, from raw data to feature space $(\beta,\tilde{t})$, of the CFAR processing chain of a detector $X$ with statistic $t_X$ and threshold $\eta_X$.} \label{fig:network}
\end{figure}

What determines the peculiar behavior of different detectors is the fourth layer, where all the design choices take effect and the actual ``learning'' is performed. 
Typically, the design/learning of the decision function in the fourth layer is unsupervised: indeed, hypothesis testing tools such as GLRT are the most usual in the traditional approach. More recently, cognitive and machine (even deep) learning techniques are providing additional (often supervised, i.e., data-driven) ways to design radar detectors \cite{deep_radar,Gini1,Gini2,arxivKNN}. Both such approaches are subsumed in the fourth layer, hence can be treated in a unified way under the proposed framework.

\subsection{Interpretation of well-known receivers in the CFAR-FP}\label{sec:interpretation}

Figs. \ref{fig:robustiN16K32}-\ref{fig:selettiviN16K32} show the equations of Table \ref{tabella} (discussed one-by-one below) in the CFAR-FP,  superimposed to point clusters obtained by generating the random variable $\bz$ under three different conditions --- $H_0$ (blue dots), $H_1$ under matched conditions (red dots), $H_1$ under mismatched conditions (cyan dots) --- and  $\bZ$ as usual under the null hypothesis, then mapping the raw data to the  $\beta$-$\tilde{t}$ plane. 
The target amplitude  $\alpha$ is generated deterministically according to the SNR
\be
\gamma = |\alpha|^2 \bp^{\dagger}  \bC^{-1} \bp \in\R_+
\label{eq:SNR}
\ee
where $\bp$ is a  steering vector whose mismatch with respect to the nominal steering $\bv$ can be quantified in terms of the cosine squared of the angle $\theta$ between $\bp$ and $\bv$ in the whitened space:
\be
\cos^2\theta=\frac{|\bp^\dag \bC^{-1}\bv|^2}{\bv^\dag\bC^{-1}\bv\, \bp^\dag
\bC^{-1}\bp} \in [0,1].
\label{eqn:defcostheta}
\ee
 Under matched conditions, of course, $\bp=\bv$ and $\cos^2 \theta=1$.

It is well-known \cite{Kelly,Kelly89,BOR-MorganClaypool} that the  performance of invariant detectors in terms of $P_d$ (for chosen $P_{fa}$) depend only on the SNR $\gamma$ \eqref{eq:SNR} and mismatch $\cos^2 \theta$ \eqref{eqn:defcostheta}; in the CFAR-FP this means that, once the desired  $P_{fa}$ has been set, all thresholds can be computed accordingly, enabling a fair comparison. Thus, since thresholds become fixed (given $P_{fa}$) the only relevant parameters are $\gamma$ and $\cos^2 \theta$ (irrespective of the values of $\bv$ and  $\bC$ used to generate the data).\footnote{For reproducibility, we specify that in Figs. \ref{fig:robustiN16K32}-\ref{fig:selettiviN16K32} clusters have 5000 points each, and we assume $\bv=[1\  \e^{i2\pi f_d} \ \cdots \ \e^{i2\pi (N-1)f_d} ]^{T}$, $N=16$, $K=32$, normalized Doppler frequency $f_d=0.08$ (a small value such that the target competes with low pass clutter), $\bp$ defined as $\bv$ but with $f_d+\delta_f$ and $\delta_f = 0.3/N$ ($\cos^2 \theta = 0.65$), $\gamma = 15$ dB, $P_{fa}=10^{-4}$, and as $\bC$ the sum of  a Gaussian-shaped clutter  and white (thermal) noise 10 dB weaker, i.e.,
$\bC = \bR_c + \sigma_n^2 \bI_N$ with  the $(m_1,m_2)$th element of the matrix $\bR_c$ given by
$[\bR_c]_{m_1,m_2} \propto \exp\{- 2\pi^2\sigma_f^2(m_1-m_2)^2\}$ and $\sigma_f \approx 0.051$ (corresponding to a one-lag correlation coefficient of the clutter component equal to $0.95$).}

To start with, we recall that Kelly's one-step GLRT  \cite{Kelly}
\begin{align}
t_{\text{\tiny{Kelly}}}  &= 
 \frac{|\bz^{\dagger} \bS^{-1} \bv |^2}{\bv^{\dagger} \bS^{-1} \bv \, (1+ \bz^{\dagger} \bS^{-1} \bz )} = \frac{\tilde{t}}{1+\tilde{t}} \label{eq:Kelly}
\end{align}
is a selective detector;  conversely, the adaptive matched filter (AMF) \cite{Kelly-Nitzberg} obtained through a two-step GLRT procedure \begin{align}
t_{\text{\tiny{AMF}}}  &= 
 \frac{|\bz^{\dagger} \bS^{-1} \bv |^2}{\bv^{\dagger} \bS^{-1} \bv } = \frac{\tilde{t}}{\beta} \label{eq:AMF}
\end{align} 
is a robust receiver.
As mentioned, $(t_{\text{\tiny{Kelly}}}, t_{\text{\tiny{AMF}}})$ is the maximal invariant for the problem at hand, hence is not surprising that the only statistics that need to be retained from the raw data ($s_1$ and $s_2$) are building blocks of \eqref{eq:Kelly}-\eqref{eq:AMF}; the only difference is that $s_1$ and $s_2$ have a  direct engineering interpretation as energy and correlation (scalar product, i.e., matched filtering).

\begin{figure}
\centering
\includegraphics[width=8cm]{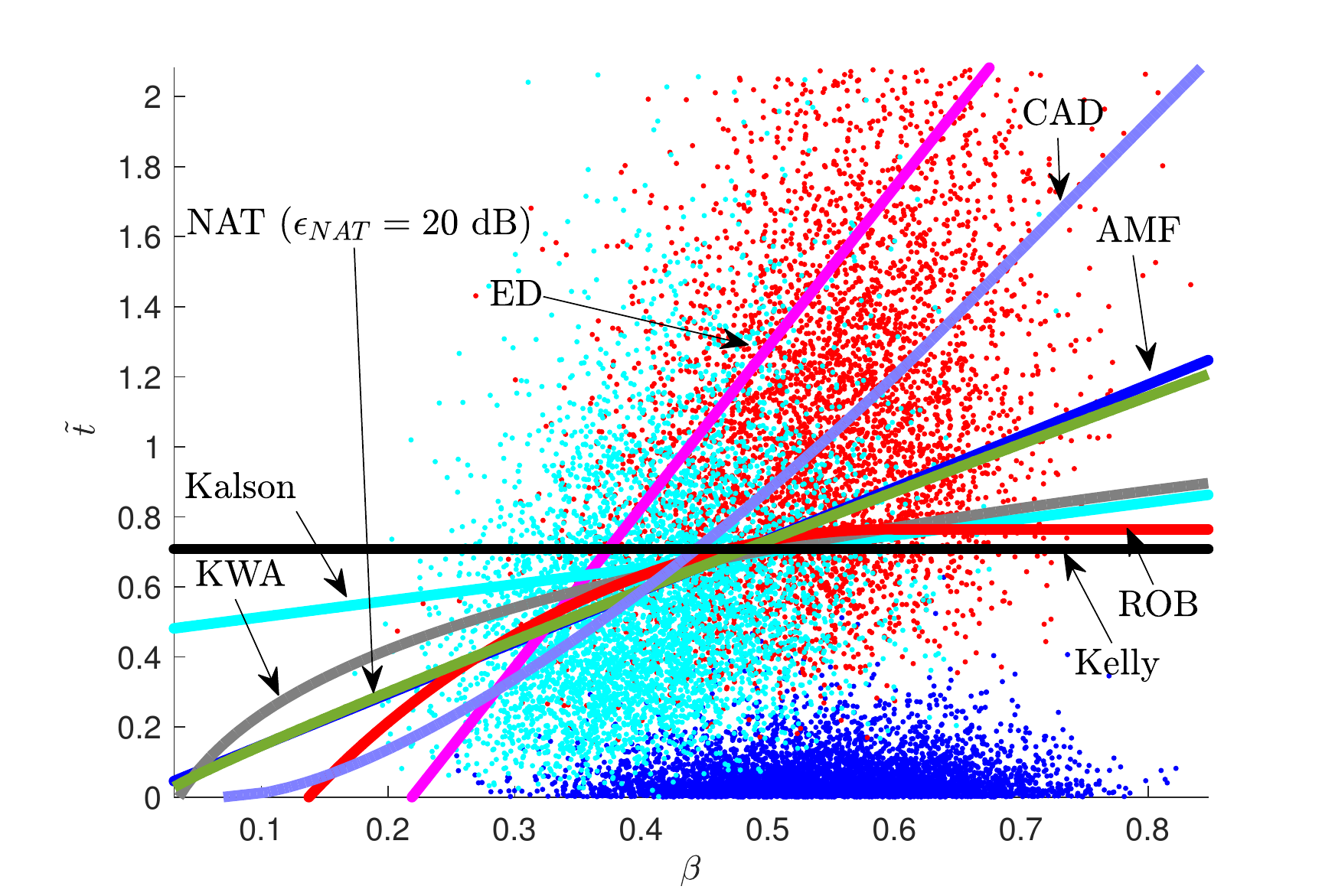}
\caption{Well-known robust receivers and proposed NAT in the CFAR-FP.}\label{fig:robustiN16K32}
\end{figure}
 \begin{figure}
\centering
\includegraphics[width=8cm]{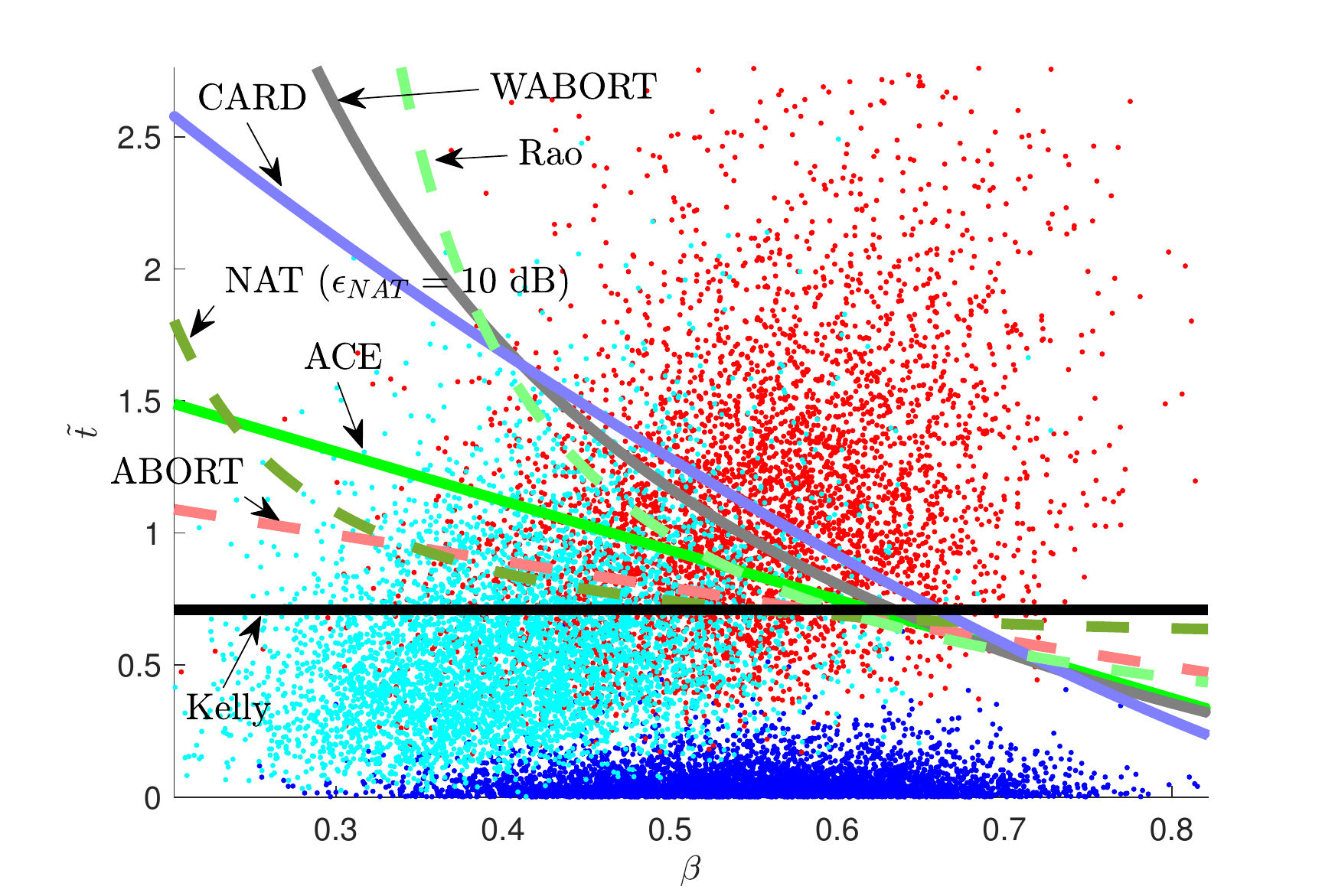}
\caption{Well-known selective receivers and proposed NAT in the CFAR-FP.}\label{fig:selettiviN16K32}
\end{figure}

The adaptive coherence estimator (ACE), also known as adaptive normalized matched filter and given by  \cite{ACE} 
\begin{align*}
t_{\text{\tiny{ACE}}}  &= 
 \frac{|\bz^{\dagger} \bS^{-1} \bv |^2}{\bv^{\dagger} \bS^{-1} \bv \  \bz^{\dagger} \bS^{-1} \bz } = \frac{\tilde{t}}{\tilde{t}+1-\beta}
\end{align*}
is  another example of selective receiver. %
Enhanced selectivity can be induced by solving a modified hypothesis testing problem that assumes the presence of a (fictitious) signal under $H_0$ either in the quasi-whitened \cite{Pulsone-Rader,Fabrizio-Farina} or whitened \cite{W-ABORT} space, so as to make it more plausible in presence of signal mismatches --- an approached called adaptive beamformer orthogonal rejection test (ABORT). Both original (quasi-whitened) ABORT and its whitened-space version (W-ABORT) can be expressed in terms of the features, respectively, as
$$
t_{\text{\tiny{A}}}  = \frac{1 +  \frac{|\bz^{\dagger} \bS^{-1} \bv |^2}{\bv^{\dagger} \bS^{-1} \bv } }{2 + \bz^{\dagger} \bS^{-1} \bz} = \frac{\tilde{t} +\beta }{\tilde{t}+1+\beta}
$$
and
\be
t_{\text{\tiny{WA}}}  = \frac{1}{(1+\bv^{\dagger} \bS^{-1} \bv) \left[ 1 -  \frac{|\bz^{\dagger} \bS^{-1} \bv |^2}{\bv^{\dagger} \bS^{-1} \bv \, (1+ \bz^{\dagger} \bS^{-1} \bz )} \right]^2}= \beta (1+\tilde{t}). \label{eq:WA}
\ee 

A further type of receivers is based on the idea of inserting a parameter in a well-known statistic, so as to obtain a tunable detector. For instance, in Kalson's detector \cite{Kalson} a nonnegative parameter, say $\epsilon_{\text{\tiny{Kalson}}}$, is introduced in the Kelly's statistic, i.e.,
\begin{align}
t_{\text{\tiny{Kalson}}}  &= 
 \frac{|\bz^{\dagger} \bS^{-1} \bv |^2}{\bv^{\dagger} \bS^{-1} \bv \, (1+ \epsilon_{\text{\tiny{Kalson}}} \bz^{\dagger} \bS^{-1} \bz )}. \label{eq:Kalson}
\end{align}
In doing so,  the degree to which mismatched signals are rejected can be controlled in between the AMF and Kelly's detector ($\epsilon_\text{\tiny Kalson}=0.5$ has been used in Fig. \ref{fig:robustiN16K32}). Kalson's detector can be rewritten in terms of the features by noticing that $t_{\text{\tiny{Kalson}}} = \frac{t_{\text{\tiny{AMF}}} }{1 + \epsilon_{\text{\tiny{Kalson}}}t_{\text{\tiny{ED}}} }$ where
\begin{equation}
t_{\text{\tiny{ED}}}  =  \bz^{\dagger} \bS^{-1} \bz \label{eq:ED}
\end{equation}
is the (adaptive) energy detector (ED). It is worth noticing that the ED totally ignores the information on the steering $\bm{v}$, hence when used alone as detection statistic will lead to reduced detection power but strong robustness.
A different tunable receiver has been proposed in \cite{KWA} by substituting the exponent of the square brackets in \eqref{eq:WA} with $2\epsilon_{\text{\tiny{KWA}}}$, which encompasses as special cases statistics equivalent to Kelly's  and  W-ABORT detectors and that, for $\epsilon_{\text{\tiny{KWA}}} < 1/2$ behaves as a robust detector but with more limited loss of $P_d$ under matched conditions, reaching the ED only as $\epsilon_{\text{\tiny{KWA}}}\rightarrow 0$. 
 
So far, a first insight can be obtained from Table \ref{tabella}:
\begin{insight}\label{ins1}
Kelly's, AMF, ACE, ED, Kalson's, and ABORT detectors are linear classifiers in the CFAR-FP: robust detectors have positive slope, while selective ones have negative slope; Kelly's detector is an horizontal line (zero slope).
\end{insight}

It can be noticed that, as quantitatively characterized in Sec. \ref{sec:characterization}, the position, shape, and orientation of the clusters change with SNR and $\cos^2 \theta$; in this respect, the positive or negative slope is suitable  to classify mismatched points as $H_1$ or $H_0$, respectively,  trading-off to some extent the achievable $P_d$ (under matched conditions) for a same SNR. Similarly, an horizontal line can effectively separate the $H_0$ cluster, since it looks horizontally spread, from any $H_1$ cluster (under matched condition) that lies in the upper part of the CFAR-FP.
We will come back on this interpretation later in Sec. \ref{sec:characterization}, after we will have provided a statistical characterization of the clusters. %

The horizontal line of Kelly's detector is obviously a consequence of choosing $\tilde{t}$ (which is equivalent to $t_\text{\tiny Kelly}$) as the ordinate axis of the CFAR-FP. More in general, other choices of the maximal invariant would yield different shapes/locations for the clusters and families of curves for the decision region boundary. 
Our choice is such that all the detectors mentioned so far, which have intimate relationship among each other, belong to the same family of linear classifiers (Insight \ref{ins1}). As an example, in terms of the maximal invariant adopted in \cite{Bose2} the AMF would have a non-linear decision region boundary, despite as known Kelly's and AMF receivers are obtained by solving the same problem under the one-step and two-step GLRT approach, and the $P_d$ of the latter tends to the $P_d$ of the former for sufficiently large sample $K$. 
Another reason to prefer the maximal invariant $(\beta, \tilde{t})$ is its widespread adoption in the literature to derive $P_{fa}$ and $P_d$ formulae, hence by using this parametrization the new results in the present paper can be more directly linked to the existing work on radar detection.

More sophisticated receivers are often non-linear classifiers in the CFAR-FP; the W-ABORT for instance (seventh row of Table \ref{tabella}) has an hyperbolic boundary $\tilde{t} \propto 1/\beta$, and similarly the KWA as well as Rao's test  \cite{Rao} given by
$$
t_{\text{\tiny{Rao}}}  = 
 \frac{|\bz^{\dagger} (\bS + \bz\bz^\dagger)^{-1} \bv |^2}{\bv^{\dagger} (\bS + \bz\bz^\dagger)^{-1} \bv } = t_{\text{\tiny{Kelly}}} \beta = \frac{\tilde{t}}{1+\tilde{t}} \beta
$$
 have hyperbolic-like boundaries; it follows that:
 \begin{insight}
 A non-linear boundary in the CFAR-FP (e.g., hyperbolic-like as in W-ABORT, KWA, and Rao) can be used to achieve a different trade-off between $P_d$ (under matched conditions) and behavior under mismatched conditions; in particular, by following more closely the cluster shapes, it may be possible to better isolate the $H_0$ cluster from the union of the matched and mismatched $H_1$ clusters, in case robustness is of interest, or   the union of the $H_0$ and mismatched-$H_1$ clusters from the matched-$H_1$ cluster, in case selectivity is desired.
 \end{insight}
Indeed, it is well-known that in several receivers enhancing robustness or selectivity often comes at the price of a certain $P_d$ loss under matched conditions compared to Kelly's receiver, with different trade-offs. 
W-ABORT for instance has strong selectivity, which however comes at the price of a reduced detection power under matched conditions. A diversified trade-off can be obtained by the KWA by tuning its parameter, which basically controls the curvature of the hyperbola-like shape ($\epsilon_\text{\tiny KWA}=0.4$ has been used in the figures).

A different idea to control the level of robustness or selectivity is to add a cone for acceptance and acceptance-rejection (hence, under the $H_1$ hypothesis only or under both hypotheses, respectively); the resulting detectors are referred to as CAD and CARD, respectively \cite{Coni-SOC,Bandiera-DeMaio-Ricci-CONI,BOR-MorganClaypool}.
More specifically, the CAD is a robust detector, encompassing as limiting cases the AMF and ED for $\epsilon_\text{\tiny CAD} \rightarrow 0$  and $\epsilon_\text{\tiny CAD} \rightarrow +\infty$, respectively; the CARD has conversely a selective behavior, whose extent can be adjusted by tuning the parameter $\epsilon_\text{\tiny CARD}$.
Both such detectors, however, experience a certain loss under matched conditions, as other receivers based on the cone idea \cite{Coni-SOC,Besson1}.
Visualized in the CFAR-FP for $\epsilon_\text{\tiny CAD}=\epsilon_\text{\tiny CARD}=0.5$, the implicit equations describing their decision boundaries appear as mildly bending lines with significant positive or negative slopes.
This explains why they cannot escape the basic trade-off of classical approaches that are linear classifiers in the CFAR-FP.
In \cite{arxivROB} it is shown, conversely, that it is possible for a CFAR detector to guarantee practically zero loss under matched conditions while providing variable robustness to mismatches, depending on the setting of a tunable parameter.
Such a parametric detector is obtained by considering under $H_1$ a fictitious signal with unknown power; in doing so, in case of matched signature no component will be  likely found and the conventional Kelly's statistic is recovered; conversely, if some mismatch  is captured by  the fictitious signal, the detector will likely find $H_1$ more plausible, making  the detector more robust.
It is thus interesting how this random-signal robustified detector (ROB) behaves in the CFAR-FP; interestingly, by equaling to a threshold its statistic
$$
t_\text{\tiny ROB} =
\left\{
\begin{array}{ll}
\frac{\left(1+\bz^{\dagger} \bS^{-1} \bz\right) \left( 1-\frac{1}{\zeta_{\epsilon}} \right)
}{\left[ \left(\zeta_{\epsilon} -1 \right)
\left\| \Pvtildeperp \tilde{\bz} \right\|^2 \right]^{\frac{1}{\zeta_{\epsilon}}} }, &
\left\| \Pvtildeperp \tilde{\bz} \right\|^2 > \frac{1}{\zeta_{\epsilon}-1} \\
\frac{1+ \left\|  \bz^{\dagger} \bS^{-1} \bz  \right\|^2}{ 1+ \left\| \Pvtildeperp \tilde{\bz} \right\|^2}, & \mbox{otherwise}
\end{array}
\right.
$$
where
$
\left\| \Pvtildeperp \tilde{\bz} \right\|^2 = \bz^{\dagger} \bS^{-1} \bz - \frac{|\bz^{\dagger} \bS^{-1} \bv|^{2}}{\bv^{\dagger} \bS^{-1} \bv} = s_2 - s_1$ and $\zeta_{\epsilon}=\frac{K+1}{N}(1+\epsilon_\text{\tiny ROB})$ with $\epsilon_\text{\tiny ROB} \geq 0$, it is possible to solve $\tilde{t}$ as an explicit function of $\beta$, reported in the second last row of Table \ref{tabella}. The resulting curve (in Fig. \ref{fig:robustiN16K32} shown for $\epsilon_\text{\tiny ROB}=0.2$) has the following interesting property:
\begin{insight}\label{ins:4}
The ROB detector \cite{arxivROB} decision region boundary in the CFAR-FP is non-linear until  $\beta=1\! -\! \frac{N}{K+1}(1+\epsilon_\text{\tiny ROB})^{-1}$, then saturates to a constant (horizontal line).
\end{insight}
This two-region behavior, with transition point $\beta \approx 0.6$ in Fig. \ref{fig:robustiN16K32} for the chosen parameters, explains the ability of ROB detector to achieve the same $P_d$ of Kelly's detector under matched conditions and at the same time strong robustness. We will come back again on this point later in Insight \ref{ins:9}.

\subsection{Derivation of natural parametric detector in the CFAR-FP}\label{sec:NAT}

Since all detectors that can be expressed as a curve in the $\beta$-$\tilde{t}$ plane are CFAR, it would be interesting to look at how it behaves a detector obtained as likelihood ratio test using such features as observations. 
This is a ``natural'' design criterion for our framework, and will ultimately lead to the parametric ``NAT'' detector reported in the last row of Table \ref{tabella}, as discussed below.
In the following we exploit the statistical characterization of the variables $\beta$ and $\tilde{t}$ that can be found in \cite{Kelly_techrep}, see also \cite{Kelly,Kelly89,BOR-MorganClaypool,KellyTR2,Richmond}.
We first notice that under $H_0$, thanks to independence, the joint distribution of $(\beta,\tilde{t})$ is given by the product of their marginal distribution, i.e., $p(\tilde{t},\beta | H_0) = p(\tilde{t} | H_0) p(\beta)$ where 
\begin{equation}
p(\tilde{t} | H_0) = \frac{(K-N+1)!}{(K-N)!} \frac{1}{(1+\tilde{t})^{K-N+2}}
\end{equation}
is a complex central F-distribution with $1$ and $K-N+1$ (complex) degrees of freedom, denoted by $\mathcal{CF}_{1,K-N+1}$, and 
\begin{equation}
p(\beta) = \frac{K!}{(N-2)! (K-N+1)!} \beta^{K-N+1} (1-\beta)^{N-2}
\end{equation}
is a complex central Beta distribution with $K-N+2$ and $N-1$ degrees of freedom, denoted by $\mathcal{C\beta}_{K-N+2,N-1}$.
Under $H_1$, we have the product of the same $p(\beta)$ and a complex noncentral F-distribution with $1$ and $K-N+1$ complex degrees of freedom and noncentrality parameter $\gamma \beta$, in symbols $\mathcal{CF}_{1,K-N+1}(\gamma \beta)$:
\begin{align}
p(\tilde{t},\beta | H_1) &= \frac{(K-N+1)!}{(K-N)!} \frac{1}{(1+\tilde{t})^{K-N+2}}  \mathrm{e}^{-\frac{\gamma \beta}{1+\tilde{t}}} \nonumber\\
&\times   \sum_{h=0}^{K-N+1}  \frac{\binom{K-N+1}{h}}{h!} \left( \frac{\gamma\beta \tilde{t}}{1+\tilde{t}} \right)^h  p(\beta).
\end{align}
As a consequence, the likelihood ratio test statistic is
\begin{equation}
t_\text{\tiny MPI} = \frac{p(\tilde{t},\beta | H_1)}{p(\tilde{t},\beta | H_0)} = \mathrm{e}^{-\frac{\gamma \beta}{1+\tilde{t}}}  \sum_{h=0}^{K-N+1} a_h(\gamma) \left( \frac{\beta \tilde{t}}{1+\tilde{t}} \right)^h \label{eq:MPI}
\end{equation}
where $a_h(\gamma)=\frac{\binom{K-N+1}{h}}{h!} \gamma^h$. 
We denote such a detector by $t_\text{\tiny MPI}$ since it is easy to show that it is equivalent to the MPI test in \cite{Bose2}, which maximizes $P_d$ for fixed SNR $\gamma$ (it is the Neyman-Pearson test for the observables $(\tilde{t},\beta)$ with known SNR) while both the uniformly most powerful (UMP) test and the UMP test in this invariant class (so-called UMPI) do not exist.
Clearly, \eqref{eq:MPI} cannot be implemented since $\gamma$ is unknown in practice, hence it represents only a theoretical bound for the achievable performance in term of $P_d$ (under matched conditions).
However, it is noticed  in \cite{Bose2} that the gap with respect to Kelly's GLRT is very small, which supports the widespread adoption of the latter as performance benchmark for matched conditions, despite no optimality is guaranteed by the GLRT procedure. Under the CFAR-FP framework, this has a geometrical interpretation (see discussion on Insight \ref{ins1}, further developed in Sec. \ref{sec:characterization}).

To come up with an implementable test, the solution in \cite{Bose2} is to derive the test that maximizes the first-order Taylor approximation of $P_d$  in the neighborhood of a chosen SNR, so obtaining a \emph{locally} most powerful invariant (LMPI) test; this is applied in particular to the detection of weak signals ($\gamma \approx 0$). 
Here we take a different path, aiming at obtaining a detector with more general applicability and, also, controllable behavior under mismatched conditions (as already mentioned, mismatches are not considered in \cite{Bose2}). %

The ad-hoc solution we propose is to replace $\gamma$ in \eqref{eq:MPI} with a tunable parameter $\epsilon_\text{\tiny NAT}$. This simple idea yields a new parametric detector $t_\text{\tiny NAT}$ with interesting properties, and was suggested by the analysis of the decision region boundary of $t_\text{\tiny MPI}$ in the CFAR-FP: Fig. \ref{fig:robustiN16K32} reveals for instance that for $\epsilon_\text{\tiny NAT}=100$ (20 dB) it has a practically linear shape and resembles a robust detector; conversely, for $\epsilon_\text{\tiny NAT}=10$ (10 dB) it has a convex shape with high curvature and appears as a very selective detector (see Fig. \ref{fig:selettiviN16K32}). 
A more detailed analysis, reported in Fig. \ref{fig:NAT_regions}, shows that between these two curves the shape of the decision region boundary changes with continuity; that is, interestingly, the selective detector progressively metamorphoses into a robust one, as demonstrated in Fig. \ref{fig:NAT_snr} in terms of $P_d$ vs SNR under matched and mismatched conditions.\footnote{A different idea could be to consider the GLRT  directly to the CFAR-FP, by maximizing  \eqref{eq:MPI} with respect to $\gamma$. This can be done by a numerical search or by computing the roots of the polynomial of degree $N-K+2$ that is obtained by imposing the derivative  of \eqref{eq:MPI} equal to zero. In all our tests, the resulting $P_d$ is the same of Kelly's detector under both matched and mismatched conditions, so we opted for the NAT idea.}

The performance of NAT for the two extrema of the interval (i.e., 10 dB and 20 dB, taken as representative of two opposite behaviors) will be further analyzed  in Sec. \ref{sec:perf_ass} in comparison with all the other detectors. From Fig. \ref{fig:NAT_snr} it is though already evident that the NAT detector under matched conditions is very close to Kelly's detector, unless $\epsilon_\text{\tiny NAT}$ is set too large; remarkably, this allows one to obtain detectors as powerful as Kelly's GLRT but more selective or more robust compared to it. Actually, for matched conditions the NAT even outperforms Kelly's detector in correspondence of the specific SNR equal to $\epsilon_\text{\tiny NAT}$: this is obvious since, in that case, the NAT receiver coincides with the MPI \eqref{eq:MPI}. However, as already noticed, the gap is very small: in Fig. \ref{fig:NAT_snr} the loss is only 0.1 dB for the same $P_d$. 
As concerns the performance under mismatched conditions, the NAT receiver has the great advantage to flexibly steer the behavior towards  robustness or selectivity, while keeping under matched conditions practically the same $P_d$ of Kelly's detector; the MPI, if it were implementable, would be more robust than Kelly's detector especially at higher SNR values (reported as a dashed curve in Fig. \ref{fig:NAT_snr}).

In addition to the NAT detector discussed above, we will provide in Sec. \ref{sec:perf_ass} different criteria to design directly in the CFAR-FP detectors with desired robust or selective behavior.

\begin{figure}
\centering
\includegraphics[width=7cm]{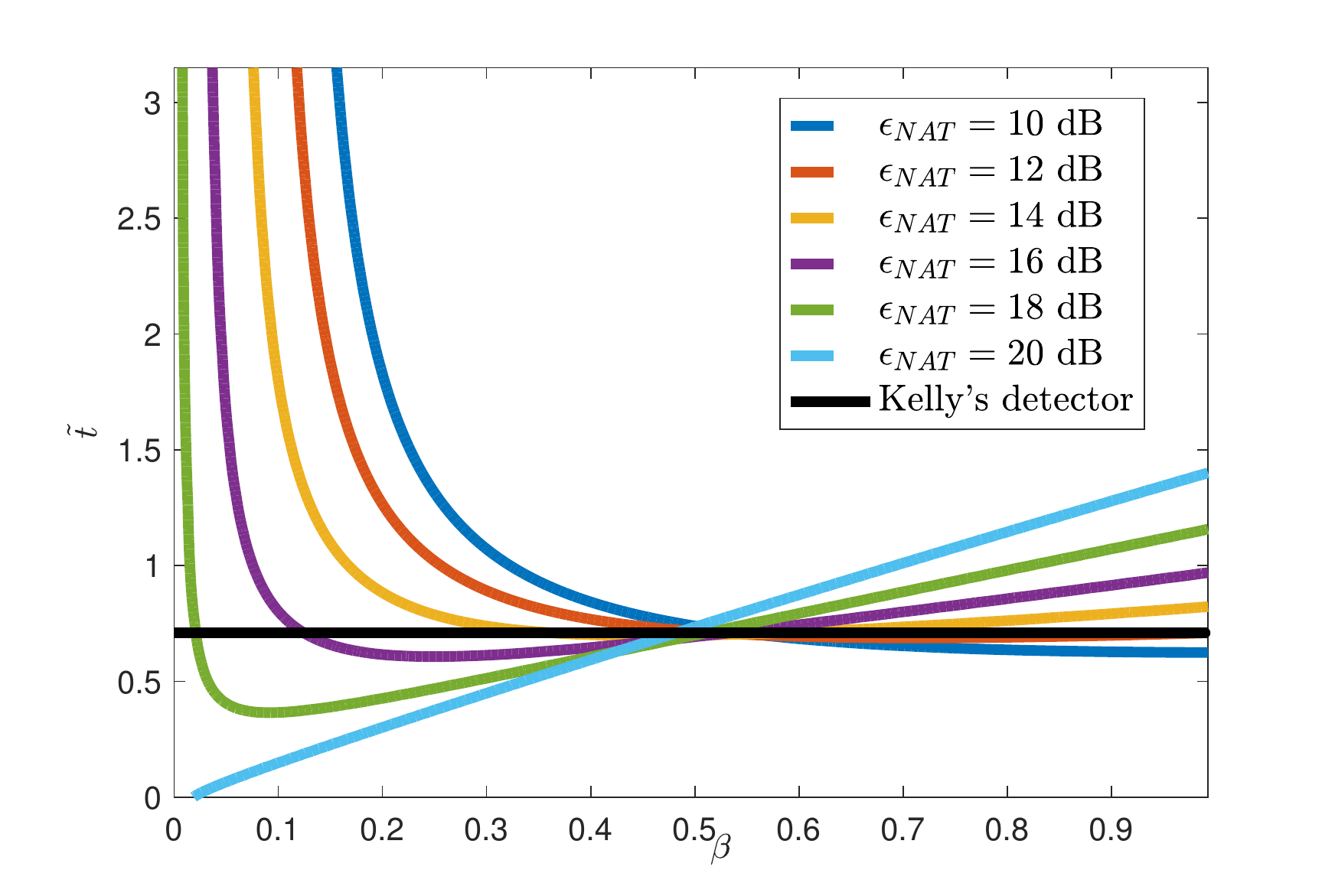}
\caption{Decision region boundaries for the NAT receiver.}\label{fig:NAT_regions}
\end{figure}
\begin{figure}
\centering
\includegraphics[width=7cm]{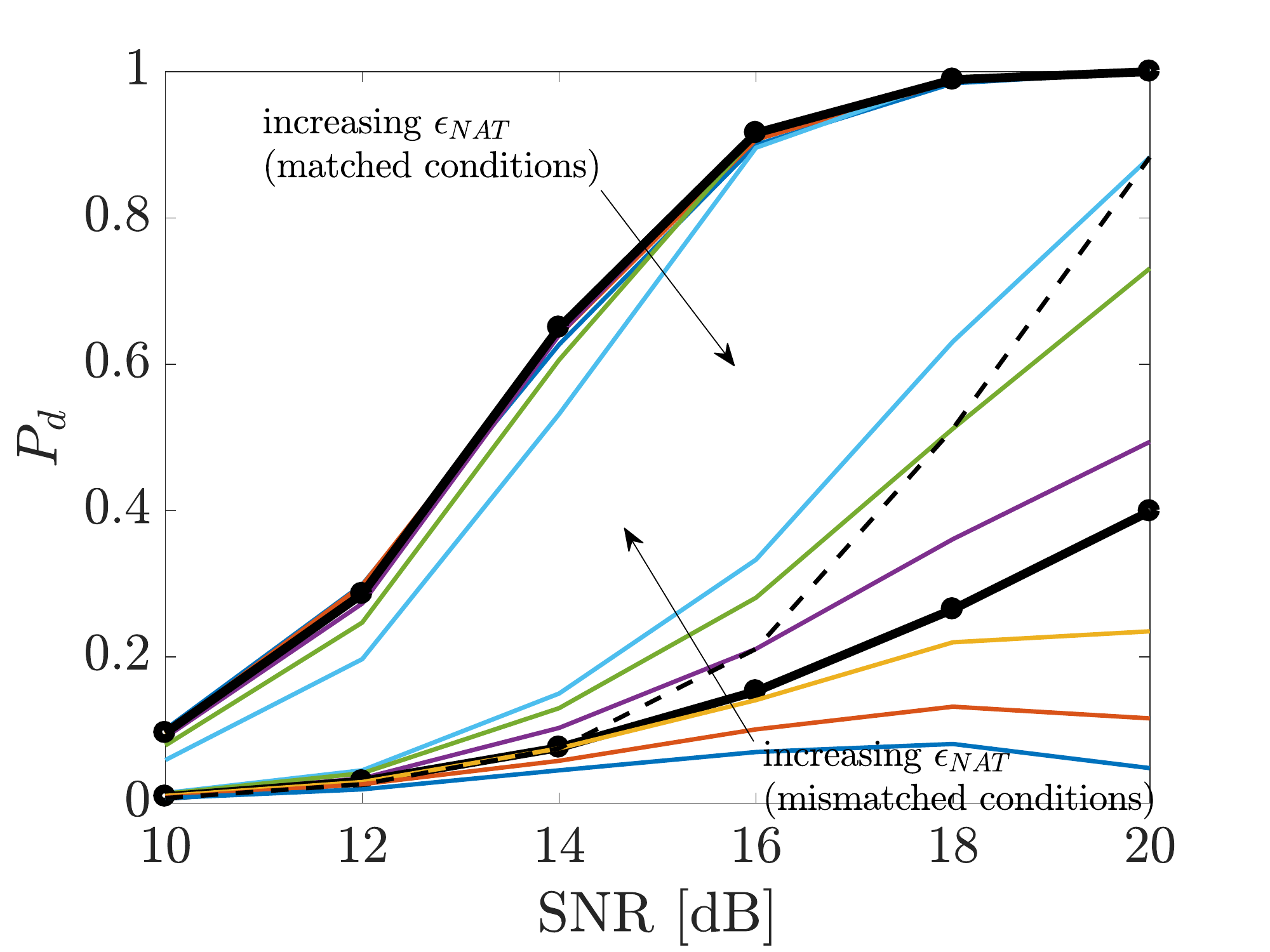}
\caption{$P_d$ vs SNR for different values of  $\epsilon_\text{\tiny NAT}$ from 10 to 20 dB in steps of 2 dB (as in Fig. \ref{fig:NAT_regions}), under both matched and mismatched conditions. Kelly's detector is shown as a thicker black line with marker. The dashed line is the MPI under mismatches (its $P_d$ under matched conditions, not visible, is practically the same of Kelly's detector).}\label{fig:NAT_snr}
\end{figure}

\section{Statistical characterization of data clusters in the CFAR-FP}\label{sec:characterization}

We provide closed-form expressions for the mean and covariance matrix of $(\beta, \tilde{t})$, which are useful to understand the position of the data clusters in the CFAR-FP and assess how they change as function of $\gamma$ and $\cos^2 \theta$. This will ultimately lead to the definition of further design criteria in the CFAR-FP.

\begin{proposition}\label{teo:cluster}
The cluster of the points $\bm{\xi} = [\beta \ \tilde{t}]^T$, built from $\bz \in \C^{N \times 1}$ and $K$ secondary data $\bZ = [\bz_1 \cdots \bz_{K}] \in \C^{N \times K}$ through the mapping \eqref{eq:FS_mapping}, has in general a slanted elliptical shape in the CFAR-FP, with  parametric equation 
$$ 
[\mu_\beta \ \mu_{\tilde{t}}]^T +  \bm{U} \bm{\Lambda}^{1/2} [\cos s \ \sin s]^T, \quad  s\in[0,1]
$$ 
where the center $\E[\bm{\xi}] = [\mu_\beta \ \mu_{\tilde{t}}]^T$ is given by
\begin{align}
\mu_\beta &=  1 - \frac{N-1}{K+1} \e^{-\gamma (1-\cos^2 \theta)} \nonumber\\
& \times  {}_2F_2\left( K+1, N; N-1, K+2; \gamma (1-\cos^2 \theta) \right) \label{eq:mu_beta}
\end{align}
\begin{equation}
\mu_{\tilde{t}} = \frac{1+ \gamma \mu_\beta  \cos^2 \theta }{K-N}  \label{eq:mu_t}
\end{equation}
and axes/orientation  are obtained from the covariance matrix
\begin{equation}
\mathrm{COV}[\bm{\xi}] =  \left[ \begin{array}{cc} \sigma^2_\beta & \rho \sigma_\beta \sigma_{\tilde{t}} \\  \rho \sigma_\beta \sigma_{\tilde{t}} & \sigma^2_{\tilde{t}}  \end{array} \right] =\bm{U}\bm{\Lambda}\bm{U}^T
\end{equation}
with
\begin{align}
\sigma^2_\beta &=  \frac{N(N-1)}{(K+2)(K+1)}  \e^{-\gamma (1-\cos^2 \theta)}  \nonumber\\
 \times{}_2F_2 & \!\left( K+1, N+1; N\!- \! 1, K+3; \gamma (1\!- \!\cos^2 \theta) \right) \! -\!  (1\!- \!\mu_\beta)^2 \label{eq:var_beta}
\end{align}
\begin{align}
\sigma^2_{\tilde{t}} &=  \frac{(\gamma \cos^2 \theta)^2 (\sigma^2_\beta +\mu^2_\beta) + (1+2 \gamma \mu_\beta \cos^2 \theta ) (K-N+1)}{(K-N)^2(K-N-1)}  \nonumber\\
& +  \frac{(\gamma \cos^2 \theta)^2 \sigma^2_\beta}{(K-N)^2}  \label{eq:var_t_tilde}
\end{align}
\begin{equation}
\rho =  \frac{1}{\sigma_\beta \sigma_{\tilde{t}}} \left[ \frac{\mu_\beta}{K-N} + \frac{\gamma   (\sigma^2_\beta +\mu^2_\beta)\cos^2 \theta}{K-N}  - \mu_\beta\mu_{\tilde{t}} \right]. \label{eq:rho}
\end{equation}
$\!{}_2F_2 (a,\! b;\! c,\! d;\! x)\!$ is the generalized hypergeometric function \cite{abramowitz}.
 \end{proposition}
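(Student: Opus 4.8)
The plan is to exploit the hierarchical conditional structure of $(\beta,\tilde t)$ and assemble both moments by iterated expectation. The starting point is the mismatched counterpart of the distributions recalled in Sec.~\ref{sec:NAT} (available in \cite{Kelly_techrep}): under $H_1$ with mismatch, $\beta$ is a complex \emph{noncentral} Beta variable whose noncentrality equals the lost SNR $\gamma(1-\cos^2\theta)$, while conditionally on $\beta$ the variable $\tilde t$ is a complex noncentral F, $\mathcal{CF}_{1,K-N+1}(\gamma\beta\cos^2\theta)$, i.e.\ with noncentrality equal to the coherent SNR $\gamma\beta\cos^2\theta$. Both reduce to the matched expressions of Sec.~\ref{sec:NAT} when $\cos^2\theta=1$. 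Since $\tilde t$ feels the mismatch only through its conditional law given $\beta$, every entry of the statement can be built from moments of $\beta$ and the elementary conditional moments of the noncentral F; this is the route I would follow.

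First I would treat $\beta$. I would write the complex noncentral Beta as a Poisson mixture of central Betas: with $k\sim\text{Poisson}\!\left(\gamma(1-\cos^2\theta)\right)$, one has $\beta\mid k\sim\mathcal{C\beta}_{K-N+2,\,N-1+k}$, i.e.\ the noncentrality increments the exponent of $(1-\beta)$. Working with $1-\beta$ is convenient because its central-Beta moments are the clean ratios $\E[(1-\beta)\mid k]=\frac{N-1+k}{K+1+k}$ and $\E[(1-\beta)^2\mid k]=\frac{(N-1+k)(N+k)}{(K+1+k)(K+2+k)}$. Averaging over the Poisson index and rewriting the emerging ratios of rising factorials through elementary Pochhammer identities of the form $\frac{(c)_k}{(c+j)_k}=\frac{(c)_j}{(c+k)_j}$ collapses the two series exactly into the ${}_2F_2$ functions in \eqref{eq:mu_beta} and \eqref{eq:var_beta}; then $\mu_\beta=1-\E[1-\beta]$ and $\sigma_\beta^2=\E[(1-\beta)^2]-(1-\mu_\beta)^2$. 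I would also record $\E[\beta^2]=\sigma_\beta^2+\mu_\beta^2$ for later use.

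Next I would obtain the $\tilde t$-moments by conditioning. For $\mathcal{CF}_{1,m}(\delta)$ with $m=K-N+1$ the mean is linear and the second moment quadratic in the noncentrality $\delta$: $\E[\tilde t\mid\beta]=\frac{1+\delta}{K-N}$ and $\VAR[\tilde t\mid\beta]=\frac{\delta^2+(1+2\delta)(K-N+1)}{(K-N)^2(K-N-1)}$, with $\delta=\gamma\beta\cos^2\theta$ (explicit forms in \cite{Kelly_techrep}). Taking the expectation over $\beta$ gives $\mu_{\tilde t}$ in \eqref{eq:mu_t} at once. The law of total variance $\VAR[\tilde t]=\E_\beta[\VAR[\tilde t\mid\beta]]+\VAR_\beta[\E[\tilde t\mid\beta]]$ then reproduces \eqref{eq:var_t_tilde}: the between-$\beta$ term equals $\frac{(\gamma\cos^2\theta)^2}{(K-N)^2}\sigma_\beta^2$ (the last summand), while the within-$\beta$ term, after substituting $\E[\delta]=\gamma\mu_\beta\cos^2\theta$ and $\E[\delta^2]=(\gamma\cos^2\theta)^2\E[\beta^2]$, yields the first summand. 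For the correlation I would use the law of total covariance: the conditional-covariance term vanishes because $\beta$ is $\beta$-measurable, so $\mathrm{Cov}[\beta,\tilde t]=\mathrm{Cov}\!\left[\beta,\E[\tilde t\mid\beta]\right]=\frac{1}{K-N}\big(\mu_\beta+\gamma\cos^2\theta\,\E[\beta^2]\big)-\mu_\beta\mu_{\tilde t}$, which is exactly $\rho\sigma_\beta\sigma_{\tilde t}$ in \eqref{eq:rho}. Notably, no integrals are needed beyond the $\beta$-moments of the second paragraph.

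Finally, the slanted-ellipse description is a second-order (mean plus covariance) summary of the cluster: applying the spectral theorem to the symmetric positive-definite $\mathrm{COV}[\bm\xi]=\bm U\bm\Lambda\bm U^T$ gives the orientation $\bm U$ and semi-axes from $\bm\Lambda^{1/2}$, and the stated parametric curve is just the corresponding iso-contour ellipse centered at $[\mu_\beta\ \mu_{\tilde t}]^T$. The main obstacle is the resummation step of the second paragraph: one must track correctly how the Poisson index shifts the Beta parameter and match the resulting rising-factorial ratios to the precise ${}_2F_2$ arguments, and likewise rely on the explicit quadratic form of the noncentral-F variance. Everything else reduces to routine iterated-expectation bookkeeping, which is why the detailed algebra can be omitted.
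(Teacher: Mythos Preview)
Your proposal is correct and follows essentially the same route as the paper's proof in Appendix~\ref{app:A}: both start from the conditional law $\tilde t\mid\beta\sim\mathcal{CF}_{1,K-N+1}(\gamma\beta\cos^2\theta)$ and the noncentral Beta law for $\beta$, then assemble $\mu_{\tilde t}$, $\sigma_{\tilde t}^2$, and $\rho$ via iterated expectation and the law of total variance/covariance exactly as you describe. The only cosmetic difference is that for the $\beta$-moments the paper recasts the complex noncentral Beta as $1$ minus a real noncentral Beta and quotes the ${}_2F_2$ mean/variance formulas directly from \cite{handbook}, whereas you rederive those same ${}_2F_2$ expressions from the Poisson-mixture representation; the end result and the logical structure are identical.
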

 \begin{proof}
 See Appendix \ref{app:A}.
 \end{proof}

Some interesting properties can be deduced from Proposition \ref{teo:cluster}, as discussed below with the aid of Figs. \ref{fig:ellipseN16K32}-\ref{fig:ellipseN16K20}. First of all, the perfect agreement between the point clouds and the theoretical $1\sigma$-ellipses drawn in dashed line can be verified. Then, a more formal understanding of the characteristics of the $H_0$ cluster empirically observed in Sec. \ref{sec:interpretation} can be obtained: 
\begin{insight}\label{ins:5}
The cluster under $H_0$ has a major axis parallel to the abscissas, i.e., it is oriented horizontally in the CFAR-FP plane, and its minor axis is quite compressed. 
\end{insight}
This can be easily seen by particularizing the expressions of Proposition \ref{teo:cluster} for $\gamma=0$, obtaining 
\begin{equation}
(\mu_\beta,\mu_{\tilde{t}}) = \left( \frac{K-N+2}{K+1}, \frac{1}{K-N} \right) \label{eq:centerH0}
\end{equation}
which in turn leads to $\rho=0$ (hence no rotation); moreover, the axes of the $1\sigma$-ellipses for $H_0$ are given by
\begin{equation}
\sigma_\beta = \frac{\sqrt{N(N-1)(K+1)-(N-1)^2(K+2)}}{(K+1)\sqrt{K+2}}\label{eq:stdH0}
\end{equation}
\begin{equation}
\sigma_{\tilde{t}} = \frac{\sqrt{K-N+1}}{(K-N) \sqrt{K-N-1}}.
\end{equation}
For increasingly large $K$ and fixed $N$, such expressions show that the $H_0$ cluster is generally quite concentrated, especially towards the vertical dimension; moreover, it turns out that $\sigma_\beta \approx \sqrt{N-1}/K$ and $\sigma_{\tilde{t}} \approx 1/(K-N)$, which means that the ratio  $\sigma_\beta/\sigma_{\tilde{t}} \approx (1-\frac{N}{K})\sqrt{N-1}$.
Thus, for concrete values of $K$ and $N$ the resulting shape is a stretched ellipsis parallel to the abscissa axis; for instance, $K=2N$ with non-small $N$ yields  $\sigma_{\tilde{t}} \approx 1/N \ll \sigma_\beta \approx 0.5/\sqrt{N}$. Clearly, we have that:
\begin{insight}\label{inscinque}
By increasing the number of secondary data $K$, the $H_0$ cluster shrinks and its center migrates towards the bottom-right corner of the CFAR-FP, becoming progressively very localized around the point $(\beta,\tilde{t})=(1,0)$ as $K\rightarrow\infty$. 
\end{insight}
This obviously reflects the increasing amount of available information that makes the $H_0$ cluster better separable from the $H_1$ one (for the comparison, notice that Figs. \ref{fig:ellipseN16K32}-\ref{fig:ellipseN16K20} are in different scales), see also the discussion about Insight \ref{ins1}.

\begin{figure}
\centering
\includegraphics[width=7cm]{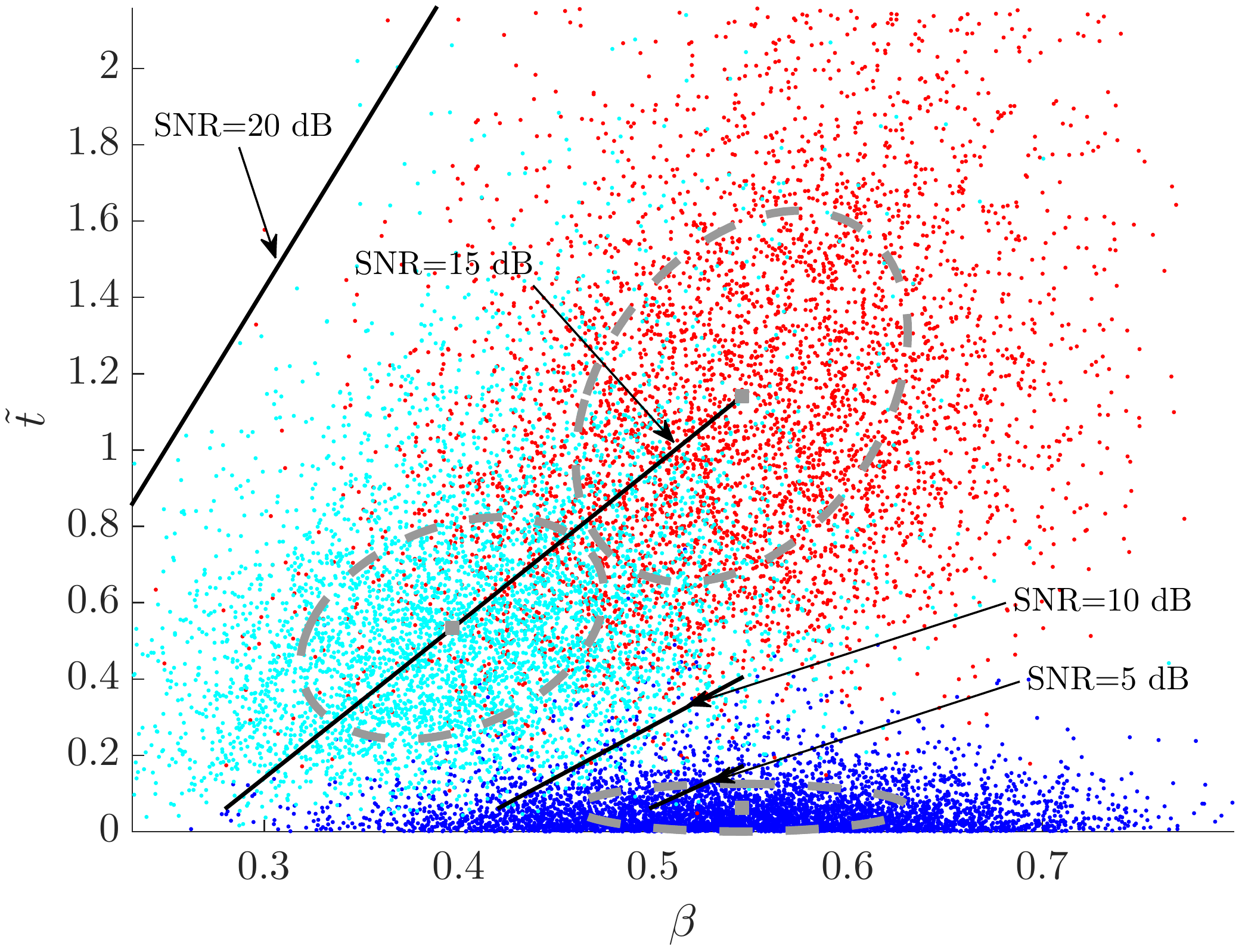}
\caption{Cluster ellipses for several values of SNR and corresponding trajectories for $\cos^2 \theta$ ranging in $[0,1]$, $K=32$.}\label{fig:ellipseN16K32}
\end{figure}
 \begin{figure}
\centering
\includegraphics[width=7cm]{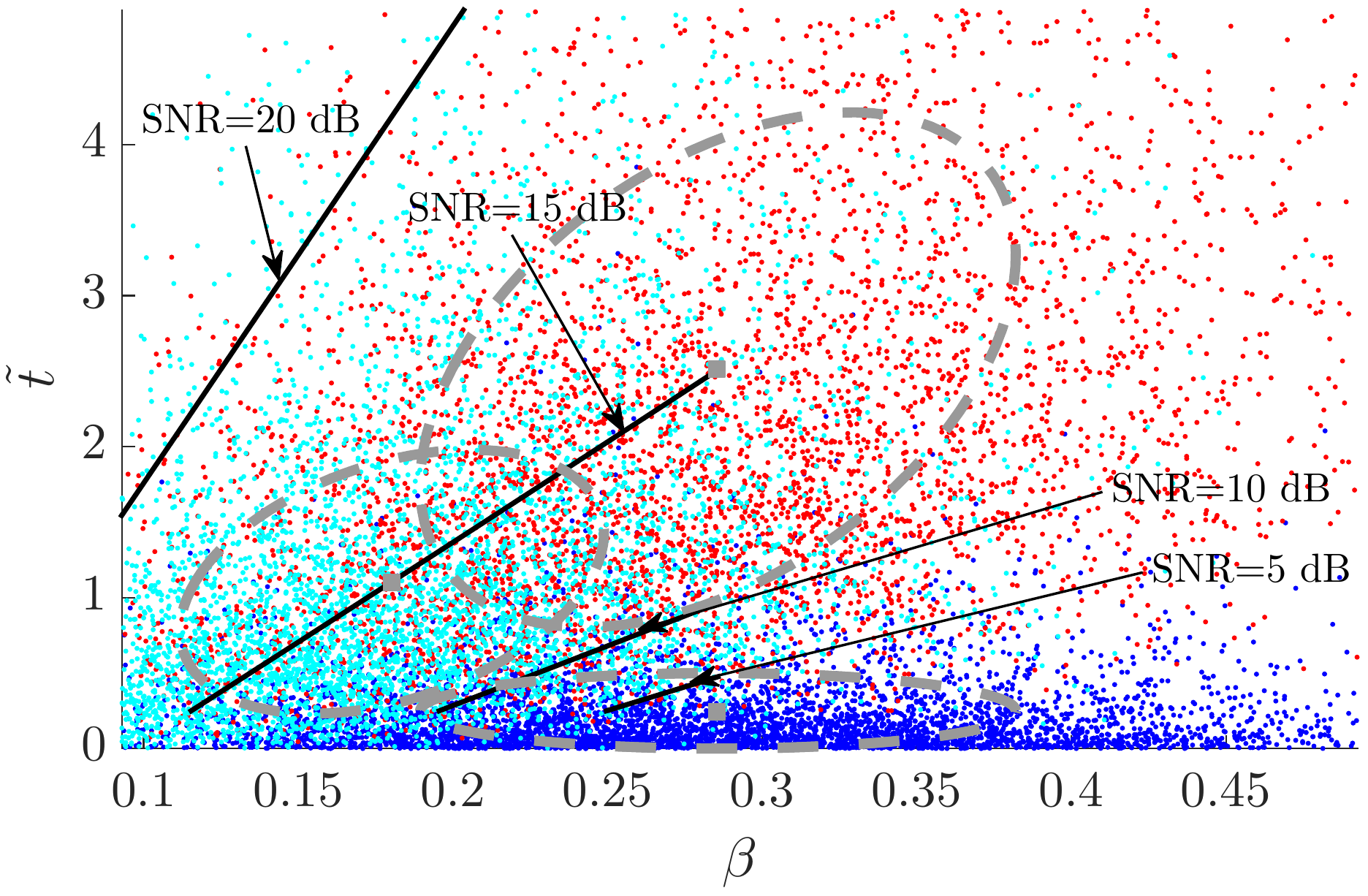}
\caption{Cluster ellipses for several values of SNR and corresponding trajectories for $\cos^2 \theta$ ranging in $[0,1]$, $K=20$.}\label{fig:ellipseN16K20}
\end{figure}

 \begin{figure}
\centering
\includegraphics[width=9cm]{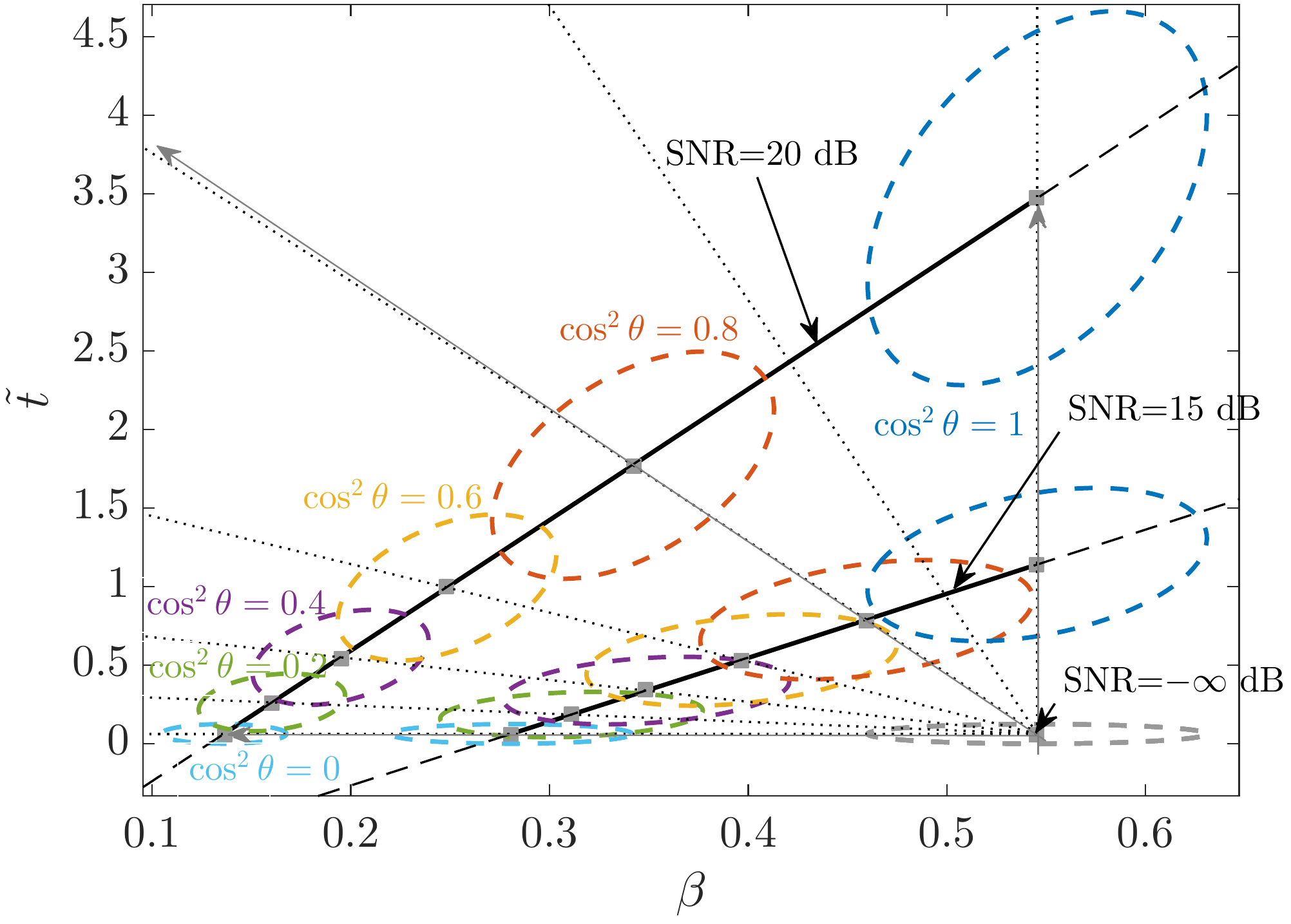}
\caption{Cluster ellipses migration for two values of SNR $\gamma$ and corresponding linear approximation of the \emph{iso-SNR} trajectories (dashed lines) for $\cos^2 \theta$ ranging in $[0,1]$, $K=32$. Dash-dot lines describe instead the \emph{iso-mismatch} trajectories of the $H_0$ cluster center as $\gamma$ increases (thus turning into $H_1$): vertical line for perfect match $\cos^2\theta=1$, horizontal line for full mismatch $\cos^2\theta=0$ (orthogonal), diagonal lines for intermediate values.}\label{fig:ellipses_migrN16K32}
\end{figure}

Another intuitive observation of Sec. \ref{sec:interpretation} that is formally proven by Proposition \ref{teo:cluster} is the following: 
\begin{insight}
The position and shape of the data point clusters in the CFAR-FP change as a function of $\gamma$ and $\cos^2\theta$. Specifically, iso-SNR (fixed $\gamma$, as function of $\cos^2\theta$) and iso-mismatch (fixed $\cos^2\theta$, as function of $\gamma$) trajectories can be drawn, which describe how the center of the cluster migrate under different conditions, while the axes shrink or expand, and rotate counterclockwise.\footnote{It is worth noticing again here, as already discussed in Sec. \ref{sec:introduction},  that by contrast in  mesa plots  iso-$P_d$ curves are drawn in a $\gamma$-$\cos^2 \theta$ plane.}
\end{insight}
In particular, by specializing the expressions in Proposition \ref{teo:cluster} for $\cos^2\theta=1$, the behavior of the $H_1$ cluster (under matched conditions) as function of $\gamma$ is readily obtained:
\begin{equation}
(\mu_\beta,\mu_{\tilde{t}}) = \left( \frac{K-N+2}{K+1}, \frac{1+\gamma\mu_\beta}{K-N} \right) \label{eq:centerH1}
\end{equation}
which of course for $\gamma=0$ returns the center of the $H_0$ cluster given in eq. \eqref{eq:centerH0}. Thus, eq. \eqref{eq:centerH1} gives the expression of the vertical line in the $\beta$-$\tilde{t}$ plane described by the center of the $H_1$ cluster as $\gamma$ increases; moreover, it is a simple matter to see that $\sigma_\beta$ is independent of $\gamma$ and identical to the value obtained for the $H_0$ cluster, while $\rho\neq 0$ and $\sigma_{\tilde{t}}$ both depend on $\gamma$, implying that, as shown in Fig. \ref{fig:ellipses_migrN16K32}:
\begin{insight}\label{ins:8}
For increasing $\gamma$, the $H_1$ cluster migrates vertically while expanding its minor axis and rotating counterclockwise.
\end{insight}

At the other extreme we have the trajectory described by the $H_1$ cluster under totally mismatched conditions; it is obtained by specializing the expressions in Proposition \ref{teo:cluster} for $\cos^2\theta=0$:
\begin{equation}
(\mu_\beta,\mu_{\tilde{t}}) = \left( 1-g(\gamma), \frac{1}{K-N} \right) \label{eq:centerH1mism}
\end{equation}
where
\begin{equation}
g(\gamma)=\frac{N-1}{K+1} \e^{-\gamma} {}_2F_2(K+1, N; N-1, K+2; \gamma) \label{eq:g}.
\end{equation}
In this case eq. \eqref{eq:centerH1mism} describes an horizontal line; moreover, both $\mu_{\tilde{t}}$ and $\sigma_{\tilde{t}}$ are independent of $\gamma$, while  $\rho=0$. Thus:
\begin{insight}\label{ins:9}
For actual steering vector orthogonal to the nominal one ($\bv^\dag \bC^{-1} \bp =0$, i.e., $\cos^2 \theta=0$), the $H_1$ cluster flattens and becomes parallel to the abscissa axis (alike $H_0$). %
\end{insight}

The two linear (vertical and horizontal) trajectories given by eqs. \eqref{eq:centerH0} and \eqref{eq:centerH1mism} can be though as ``canonical directions'' for the detection problem. In particular, as already observed in Insight \ref{ins1}, Kelly's detector has the singular property of being an horizontal line in the CFAR-FP; thus, a possible explanation of the very good detection power of Kelly's detector is that, given the characteristics of the $H_0$ cluster (see Insights \ref{ins:5} and \ref{inscinque}), an horizontal decision region boundary tends to keep the $H_0$ cluster well-separated from any $H_1$ cluster under matched conditions. On the other hand, Kelly's detector is designed without taking into account the possibility of a mismatch on the steering vector, i.e., without trying to adapt its behavior to cope with this eventuality, as conversely done by most of other detectors analyzed so far --- which in fact exhibit an oblique-linear or even non-linear boundary. An obvious question is then how a detector that is orthogonal to such a canonical horizontal (Kelly's) direction can perform, i.e., a detector based on $\beta$ alone. From the analysis of the clusters' positions, it is clear that such a solution cannot provide good detection performance, since a vertical line is unsuitable to separate the $H_0$ and $H_1$ clusters. However, we also notice that a large enough value of $\beta$ indicates with high probability that the data under test does not belong to any $H_1$ cluster under mismatched condition, since the latter tends to be confined in the left-most half of the CFAR-FP (thus, the hypothesis in force is likely either $H_0$ or $H_1$ under matched conditions). It turns out that this property is  exploited by the ROB detector \cite{arxivROB} (see Insight \ref{ins:4}), which has an increasing (non-linear) trend in the lower range of $\beta$ while it is horizontal for larger values, to achieve zero $P_d$ loss compared to Kelly's detector under matched conditions and at the same time  to be very robust to mismatches, while fulfilling the $P_{fa}$ constraint. Summarizing:
\begin{insight}
The horizontal ($\tilde{t}=\text{const}$) and vertical ($\beta=\text{const}$) decision region boundaries are ``canonical'' directions in the CFAR-FP: the former basically acts as discriminative feature for the $H_1$ cluster under matched conditions; the latter basically acts as discriminative feature for the $H_1$ cluster under mismatched conditions; both of them are needed to promote either robustness or selectivity in the detector.
\end{insight}

A final and remarkable insight is as follows. Notice  that  the iso-mismatch trajectories of $(\mu_\beta(\gamma), \mu_{\tilde{t}}(\gamma))$ obtained for chosen values of $\cos^2\theta \neq 0,1$  as function of $\gamma$ are generally non-linear: their equations are indeed those of Proposition \ref{teo:cluster}, reported as dotted lines  in Fig. \ref{fig:ellipses_migrN16K32}. However, it is apparent that they are very close to straight lines (an example is displayed with an arrow for $\cos^2=0.8$); similarly, as the mismatch parameter $\cos^2 \theta$ varies in $[0,1]$ for fixed $\gamma$, the center of the cluster describes in the CFAR-FP an iso-SNR trajectory that is linear with excellent approximation (compare dashed lines with solid lines in Fig. \ref{fig:ellipses_migrN16K32}), despite the highly non-linear relationship given by the generalized hypergeometric function in eq. \eqref{eq:mu_beta}. This non-trivial fact is stated and proved below.

\begin{proposition}\label{teo:linear_migration}
The iso-SNR curves described  by $(\mu_\beta,\mu_{\tilde{t}})$ in the CFAR-FP for fixed $\gamma$ as function of $\cos^2 \theta\in [0,1]$, as defined in Proposition \ref{teo:cluster}, are well approximated by the  line
\begin{equation}
\tilde{t} = m \beta + q
\end{equation}
where
\begin{equation}
m=\frac{K+2-N}{(K-N)[(K+1)g(\gamma) +1 - N]} \gamma \label{eq:m}
\end{equation}
$$q=m \left[ g(\gamma)-1 \right] + \frac{1}{K-N}$$
and $g(\gamma)$ is given by eq. \eqref{eq:g}.
A similar result can be obtained for iso-mismatch trajectories (fixed $\cos^2\theta$ as a function of $\gamma$).
\end{proposition}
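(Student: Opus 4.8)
The plan is to parametrize the iso-SNR trajectory by the single scalar $x=\gamma(1-\cos^2\theta)$, which ranges bijectively over $[0,\gamma]$ as $\cos^2\theta$ ranges over $[0,1]$, and to exploit the fact that both coordinates in Proposition~\ref{teo:cluster} depend on $\cos^2\theta$ only through $h(x):=\frac{N-1}{K+1}\e^{-x}\,{}_2F_2(K+1,N;N-1,K+2;x)$. Indeed \eqref{eq:mu_beta} reads $\mu_\beta=1-h(x)$, and inserting $\cos^2\theta=1-x/\gamma$ into \eqref{eq:mu_t} gives $(K-N)\mu_{\tilde t}=1+(\gamma-x)(1-h(x))$, while $g(\gamma)=h(\gamma)$ (cf.\ \eqref{eq:g}). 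First I would pin down the endpoints: at $\cos^2\theta=1$ ($x=0$) the identity ${}_2F_2(\cdots;0)=1$ gives $(\beta,\tilde t)=\bigl(\tfrac{K-N+2}{K+1},\tfrac{1+\gamma(K-N+2)/(K+1)}{K-N}\bigr)$, and at $\cos^2\theta=0$ ($x=\gamma$) one gets $(\beta,\tilde t)=(1-g(\gamma),\tfrac{1}{K-N})$. Writing the two-point line through these endpoints is routine algebra and returns exactly the stated slope $m$ in \eqref{eq:m} and intercept $q$; this part is exact.

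The substance of the proposition is that this secant approximates the \emph{whole} trajectory, i.e.\ that the curve is nearly straight. To see why, I would compute the tangent slope of the parametric curve. From $\mu_\beta=1-h(x)$ one has $d\mu_\beta/dx=-h'(x)$, and differentiating $(K-N)\mu_{\tilde t}=1+(\gamma-x)(1-h(x))$ gives $(K-N)\,d\mu_{\tilde t}/dx=-(1-h(x))-(\gamma-x)h'(x)$; dividing,
\begin{equation}
\frac{d\tilde t}{d\beta}=\frac{1}{K-N}\left[\frac{1-h(x)}{h'(x)}+(\gamma-x)\right].
\end{equation}
Thus the trajectory is exactly straight iff the bracket is constant in $x$, so the whole problem reduces to understanding the ratio $(1-h)/h'$.

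The key observation is a Poisson representation of $h$. Simplifying the Pochhammer ratio, $\frac{(K+1)_k(N)_k}{(N-1)_k(K+2)_k}=\frac{(K+1)(N+k-1)}{(N-1)(K+k+1)}$, so $h(x)=\e^{-x}\sum_{k\ge0}\frac{N+k-1}{K+k+1}\frac{x^k}{k!}=\E[\phi(Y)]$ with $Y\sim\mathrm{Pois}(x)$ and $\phi(k)=\frac{N+k-1}{K+k+1}=1-\frac{K-N+2}{K+k+1}$. Replacing the Poisson average by its value at the mean, $h(x)\approx\phi(x)$ and $h'(x)\approx\phi'(x)$, one has the exact identity $\frac{1-\phi(x)}{\phi'(x)}=K+x+1$ (since $1-\phi=\frac{K-N+2}{K+x+1}$ and $\phi'=\frac{K-N+2}{(K+x+1)^2}$). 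The bracket then collapses to $(K+x+1)+(\gamma-x)=K+1+\gamma$, independent of $x$: under this smooth surrogate the trajectory is \emph{exactly} a line of constant slope $\frac{K+1+\gamma}{K-N}$. As a consistency check, substituting $g(\gamma)=\phi(\gamma)$ into \eqref{eq:m} collapses $m$ to the very same $\frac{K+1+\gamma}{K-N}$, so the endpoint-secant and the surrogate line agree.

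The main obstacle is therefore the purely quantitative justification of ``well approximated,'' i.e.\ bounding the variation of $(1-h)/h'$ about the affine profile $K+x+1$ over $x\in[0,\gamma]$. I would control this through the Poisson--Jensen gap: since $\phi$ is concave with $\phi''(x)=-\frac{2(K-N+2)}{(K+x+1)^3}$ and $\mathrm{Var}(Y)=x$, one has $0\le\phi(x)-h(x)\approx\tfrac12|\phi''(x)|\,x=O(x/K^2)$, with an analogous $O(1/K^2)$ bound on $h'-\phi'$; these feed into a bound on the deviation of the tangent slope from $\frac{K+1+\gamma}{K-N}$ that is negligible for $K\ge N$ and moderate $\gamma$, explaining the near-perfect linear fit in Fig.~\ref{fig:ellipses_migrN16K32}. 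Finally, the iso-mismatch case (fix $\cos^2\theta$, vary $\gamma$) uses the same parametrization $x=\gamma(1-\cos^2\theta)$ and the same bracket identity, with endpoints taken at $\gamma=0$ (the $H_0$ centre) and a reference SNR; the identical surrogate computation yields the analogous linear approximation, establishing the final sentence of the statement.
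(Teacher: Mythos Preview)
Your derivation is correct, and it shares with the paper the same starting ingredients (the parametrization $x=\gamma\sin^2\theta$, the series $h(x)=\e^{-x}\sum_n a_n x^n/n!$ with $a_n=\frac{N-1+n}{K+1+n}$, and the exact secant through the two endpoints yielding $m,q$), but the core step---justifying near-linearity of the trajectory---proceeds along a genuinely different route. The paper sums the series in closed form via the lower incomplete Gamma function, obtaining $\mu_\beta=(K-N+2)(-x)^{-(K+1)}\e^{-x}\gamma(K+1,-x)$, and then invokes the large-$s$ asymptotic $\gamma(s,z)\sim z^s\e^{-z}/s$ to argue that $\mu_\beta$ depends only weakly on $x$ (to leading order it collapses to the constant $(K-N+2)/(K+1)$); from this ``near-constancy of the abscissa'' the linearity of the trajectory is inferred heuristically. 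Your approach instead computes the tangent slope $\frac{d\tilde t}{d\beta}=\frac{1}{K-N}\big[\frac{1-h}{h'}+(\gamma-x)\big]$ and shows it is (approximately) constant by replacing the Poisson expectation $h(x)=\E[\phi(Y)]$ with the value at the mean $\phi(x)$, for which the bracket collapses \emph{exactly} to $K+1+\gamma$; the Jensen gap $\phi(x)-h(x)\approx\tfrac12|\phi''(x)|x=O(x/K^2)$ then furnishes an explicit error control.

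What each buys: the paper's route produces a neat closed-form identity for $\mu_\beta$ in terms of a special function, but its passage from ``$\mu_\beta$ varies little'' to ``the curve is nearly a straight line'' is qualitative (small horizontal extent does not by itself preclude curvature). Your argument is tighter on precisely this point: constancy of the tangent slope is the direct statement of linearity, and the Poisson--Jensen bound makes the order of the deviation transparent. Your consistency check---that under the same surrogate $g(\gamma)\approx\phi(\gamma)$ the secant slope $m$ in \eqref{eq:m} collapses to the very same $(K+1+\gamma)/(K-N)$---is a nice touch not present in the paper. Your treatment of the iso-mismatch case is also more explicit than the paper's, which merely states that ``the equations for fixed $\cos^2\theta$ can be similarly obtained.''
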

 \begin{proof}
See Appendix \ref{app:B}.
 \end{proof}

\section{Design and performance assessment of radar detectors in the CFAR-FP}\label{sec:perf_ass}

From the comprehensive discussion of the CFAR-FP framework performed above, it should be clear that the $\beta$-$\tilde{t}$ plane can be exploited also as a design tool. A first detector we have proposed is the NAT detector given in Sec. \ref{sec:NAT}, which is derived based on the likelihood ratio test computed directly for the ``observables'' $(\beta,\tilde{t})$. %
In general, ad-hoc detectors can be designed by empirically drawing the desired decision region boundary in the CFAR-FP; then, the equation  that will ultimately become the decision statistic is obtained through curve fitting or other means. 
Though appealing, such an approach has two main drawbacks. First, it lacks a practical mechanism to control the $P_{fa}$: %
indeed, given an arbitrary decision region boundary in the CFAR-FP, the resulting $P_{fa}$ can be computed from the characterization of the $H_0$ cluster, either analytically or numerically; however, in order to come up with a detector working at a preassigned $P_{fa}$, one has to iteratively modify by trial-and-error the decision region boundary until the $P_{fa}$ constraint is fulfilled. Second, this time-consuming procedure would anyway yield a very tailored ad-hoc detector, meaning that the whole procedure must be repeated from scratch if any of the parameters is changed.

A more methodological approach can be identified by observing that any detector admitting an explicit function for its decision region boundary (e.g., all but three in Table \ref{tabella}) can be equivalently expressed in terms of the statistic
\begin{equation}
t = \tilde{t} - f(\beta, \bm{\epsilon})\label{eq:struct}
\end{equation}
with $\bm{\epsilon}$ a possible vector of parameters the detector depends upon and $f(\cdot)$ a known function of $\beta$ parameterized in $\bm{\epsilon}$. 
\begin{figure}
\centering
\includegraphics[width=8cm]{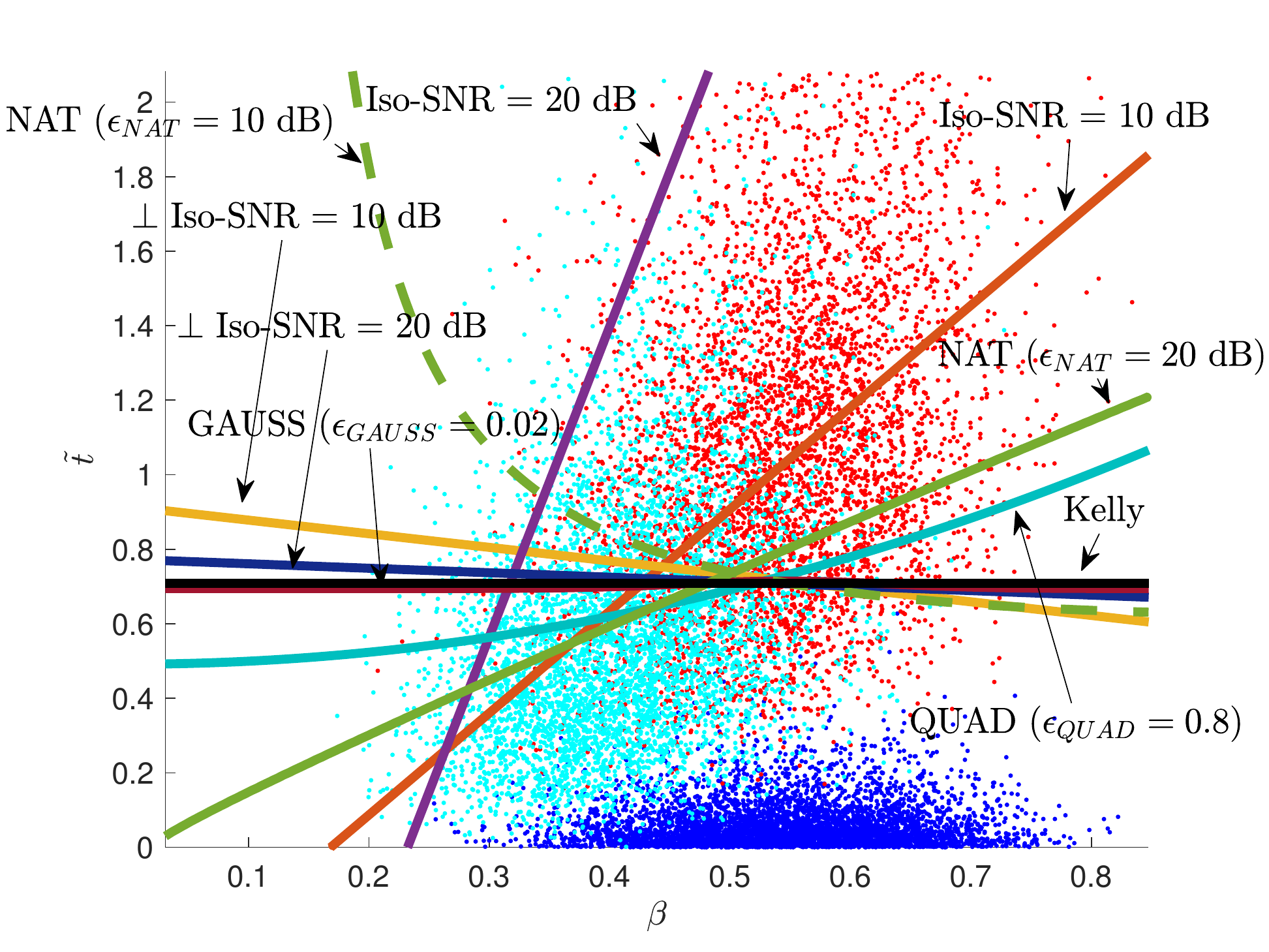}
\caption{Decision region boundaries of the different proposed detectors in the CFAR-FP, in comparison with Kelly's detector (notice that Iso-SNR and $\perp$Iso-SNR do not look orthogonal only due to the different scales of the axes).}
\label{fig:regioni_proposed}
\end{figure}
To overcome the drawbacks of the previously discussed trial-and-error procedure, we propose a more practical strategy: choose a model for $f$ depending on a small number of parameters $\bm{\epsilon}$  that are set upfront, leaving $\eta$ as the only one to be optimized to satisfy the $P_{fa}=\psi$ constraint. The threshold $\eta$ is in fact the constant term of the decision boundary function, since any other constant term added to $f$ can be absorbed into $\eta$. So, for a detector $X$, the  final test can be always written as
$$
\tilde{t} - f(\beta,\bm{\epsilon}_X)  \test \eta_X.
$$
Simple choices for $f$ are a polynomial $f=\sum_{i=1}^p \epsilon_i \beta^i$ with low order $p$ or the Gaussian function $f=\epsilon_1 e^{-\frac{(\beta-\epsilon_2)^2}{\epsilon_3}}$ with tunable height $\epsilon_1$, location $\epsilon_2$, and width $\epsilon_3$ parameters. Still, setting two or three parameters without any guideline is a non-trivial task, hence special cases of such functions with only one degree of freedom are preferable, i.e., $\bm{\epsilon}_X$ is a scalar $\epsilon_X$. Examples are:
\begin{itemize}
\item the quadratic function without the linear term and with upward concavity, leading to the decision boundary 
$$
\tilde{t} -  \epsilon_\text{\tiny QUAD} \beta^2 = \eta_\text{\tiny QUAD}, \quad \epsilon_\text{\tiny QUAD}>0
$$ 
which is a parabolic (convex) arc in the CFAR-FP thus separating the $H_0$ cluster from the $H_1$ cluster under both matched and mismatched conditions (expectedly producing a robust behavior), see Fig. \ref{fig:regioni_proposed};
\item the Gaussian function with location parameter $\epsilon_2=\mu_\beta$ from eq. \eqref{eq:centerH0} and width parameter $\epsilon_3=2\sigma_\beta^2$ with $\sigma_\beta$ from eq. \eqref{eq:stdH0}, leading to the decision region boundary 
$$
\tilde{t} -  \epsilon_\text{\tiny GAUSS} \e^{-\frac{(\beta-\mu_\beta)^2}{2\sigma^2_\beta}} = \eta_\text{\tiny GAUSS}, \quad \epsilon_\text{\tiny GAUSS}>0
$$ 
aimed at better enclosing the $H_0$ cluster; this is motivated by the fact that the marginal distribution $p(\beta)$ (under $H_0$) is similar to a normal density $\mathcal{N}(\mu_\beta,\sigma^2_\beta)$ (figure omitted due to lack of space)\footnote{Clearly, the location parameter is chosen as the center of the $H_0$ cluster; as to the width parameter, a reasonable value should capture almost the whole support of the $H_0$ cluster, hence can be linked to the standard deviation of $\beta$. Other choices are of course possible, anyway equally or even more heuristic.}, hence the Gaussian shape can follow the dispersion of the points as much as the vertical shift to satisfy the $P_{fa}$ constraint allows to (see Fig. \ref{fig:regioni_proposed});
\item the linear function $y=\epsilon_\text{\tiny LIN} \beta$, so that the decision region boundary in the CFAR-FP is 
$$
\tilde{t} -  \epsilon_\text{\tiny LIN} \beta = \eta_\text{\tiny LIN}, \quad \epsilon_\text{\tiny LIN}\in \R.
$$ 
\end{itemize}

 We remark that in these examples only $\epsilon_X$ must be manually set, while $\eta_X$ is obtained by inverting the $P_{fa}$ formula. However, while the setting of $\epsilon_X$ for the quadratic and Gaussian functions remains ad-hoc, for the linear function we are able to provide a criterion that can serve as a guideline for setting the slope $\eta_\text{\tiny LIN}$.
The idea is to use reference directions that have a specific meaning in the CFAR-FP and, also, are parametric in $N,K$ so that the resulting detection statistic is general. Proposition \ref{teo:linear_migration} can be exploited to this aim: in particular, iso-SNR lines are provided in closed-form and have an intuitive interpretation that can help in the choice. Specifically, knowing how the $H_1$ cluster migrates for increasing mismatch allows one to try to keep it ``above threshold'' so obtaining a robust detector; conversely, setting the decision region boundary orthogonally to a iso-SNR line, will produce a $H_0$ decision (rejection) as soon as the mismatch exceeds a certain level, thus yielding a selective behavior. So, $\epsilon_\text{\tiny LIN}$ can be set equal to $m$ in eq. \eqref{eq:m} (for chosen $\cos^2 \theta$) to obtain a robust detector; instead, $\epsilon_\text{\tiny LIN} = -1/m$ will lead to a selective detector (labeled ``$\perp$iso-SNR''). The performance of such receivers will be more quantitatively assessed below in terms of $P_d$ vs SNR under matched and mismatched conditions.

Beforehand, as a final note, we observe that other reference directions can be considered, in particular iso-mismatch lines. However, their expressions are slightly more involved; %
moreover, it is easy to realize that considering the $\perp$iso-SNR together with the iso-SNR lines spans all possible directions of the plane, hence iso-mismatch curves would not add more possibilities (just a different interpretation). For these reasons, we stick to iso-SNR lines, hence omitted the explicit expressions of iso-mismatch lines in Proposition \ref{teo:linear_migration}.

\begin{figure}
\centering
\subfigure[matched]{\includegraphics[width=8cm]{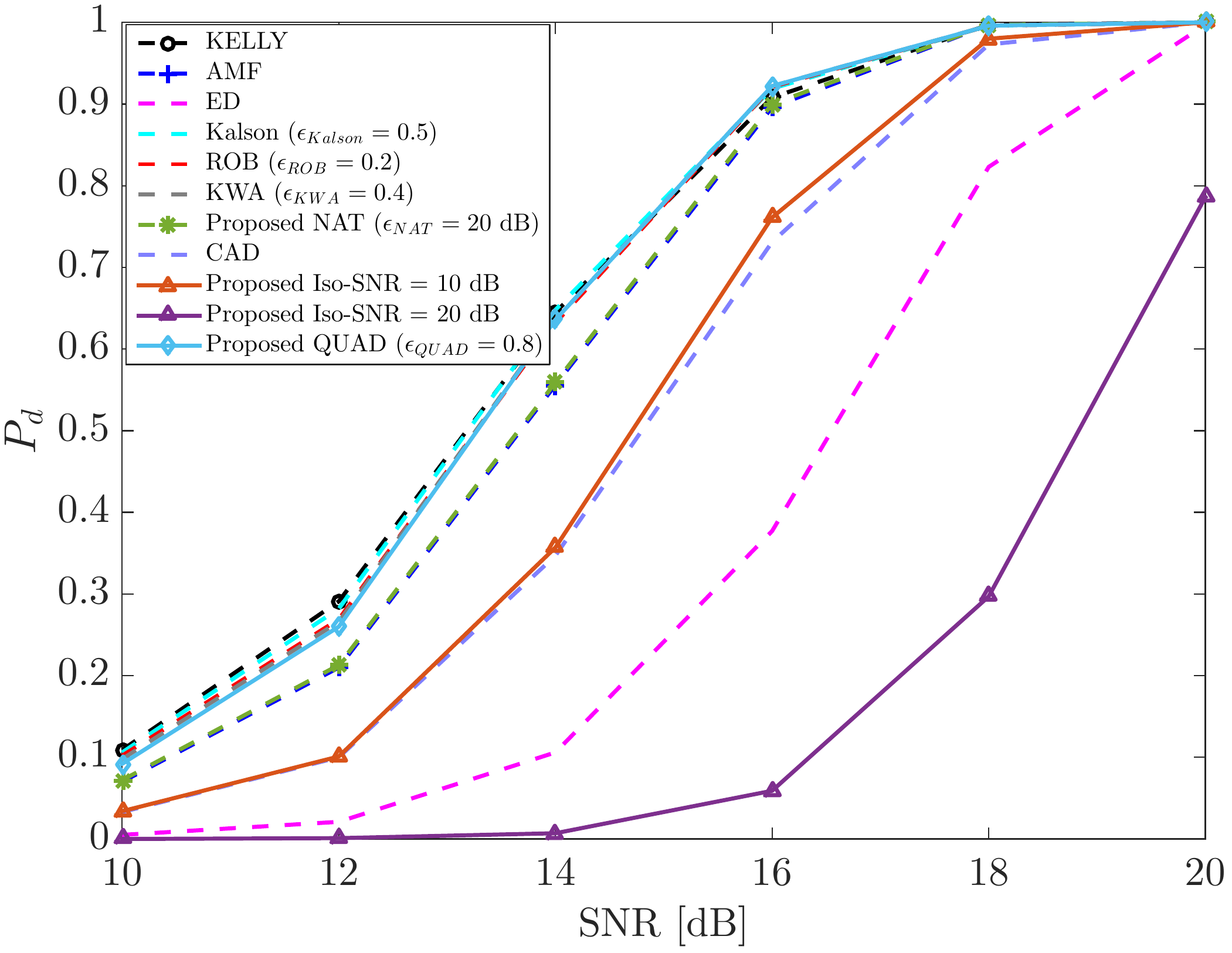}}  %
\subfigure[mismatched]{\includegraphics[width=8cm]{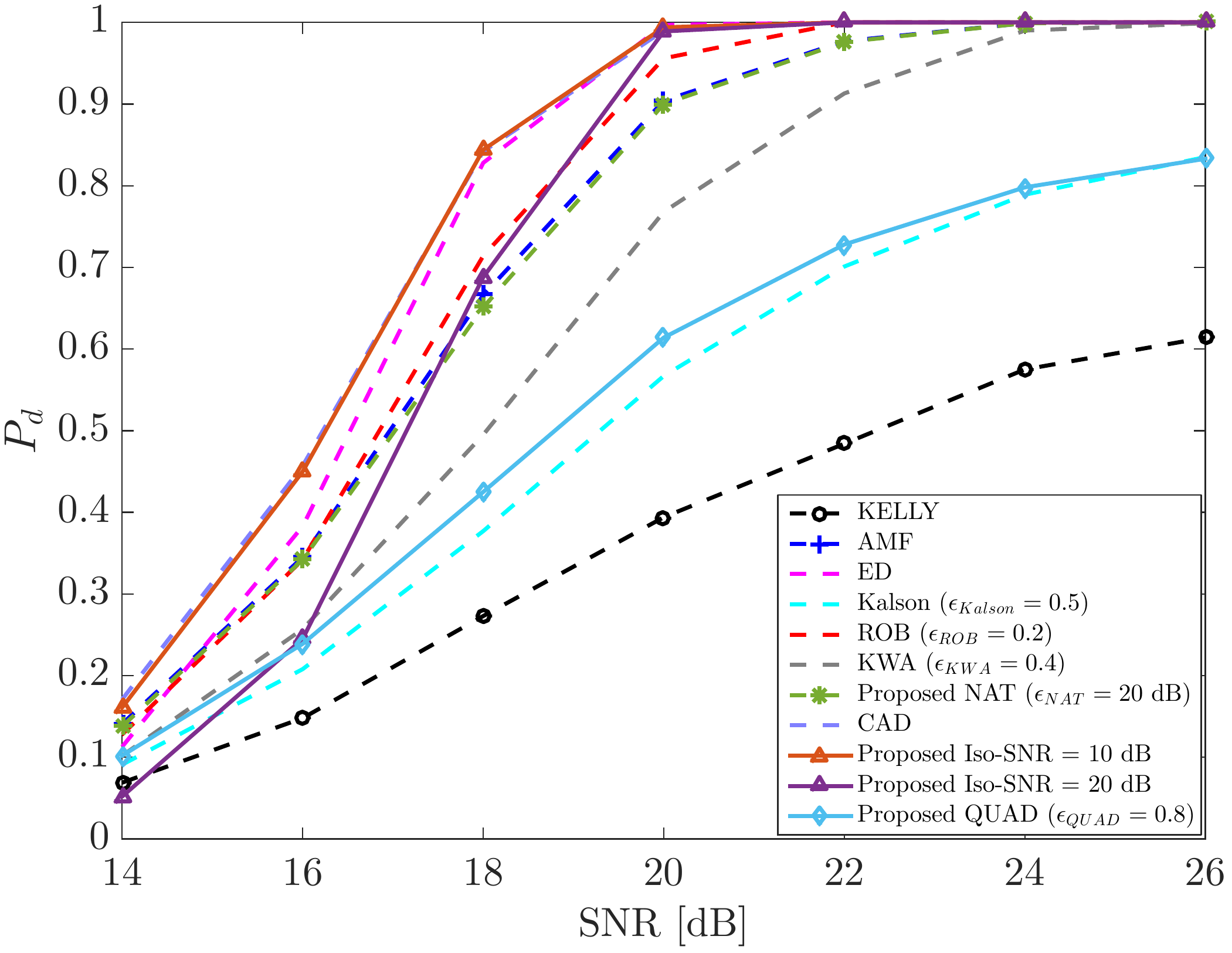}} %
\caption{Comparison among robust receivers, including the proposed detectors NAT (for $\epsilon_\text{\tiny NAT}=20$ dB) and iso-SNR for two values of the reference SNR used in the design.}\label{fig:robust}
\end{figure}
\begin{figure}
\centering
\subfigure[matched]{\includegraphics[width=8cm]{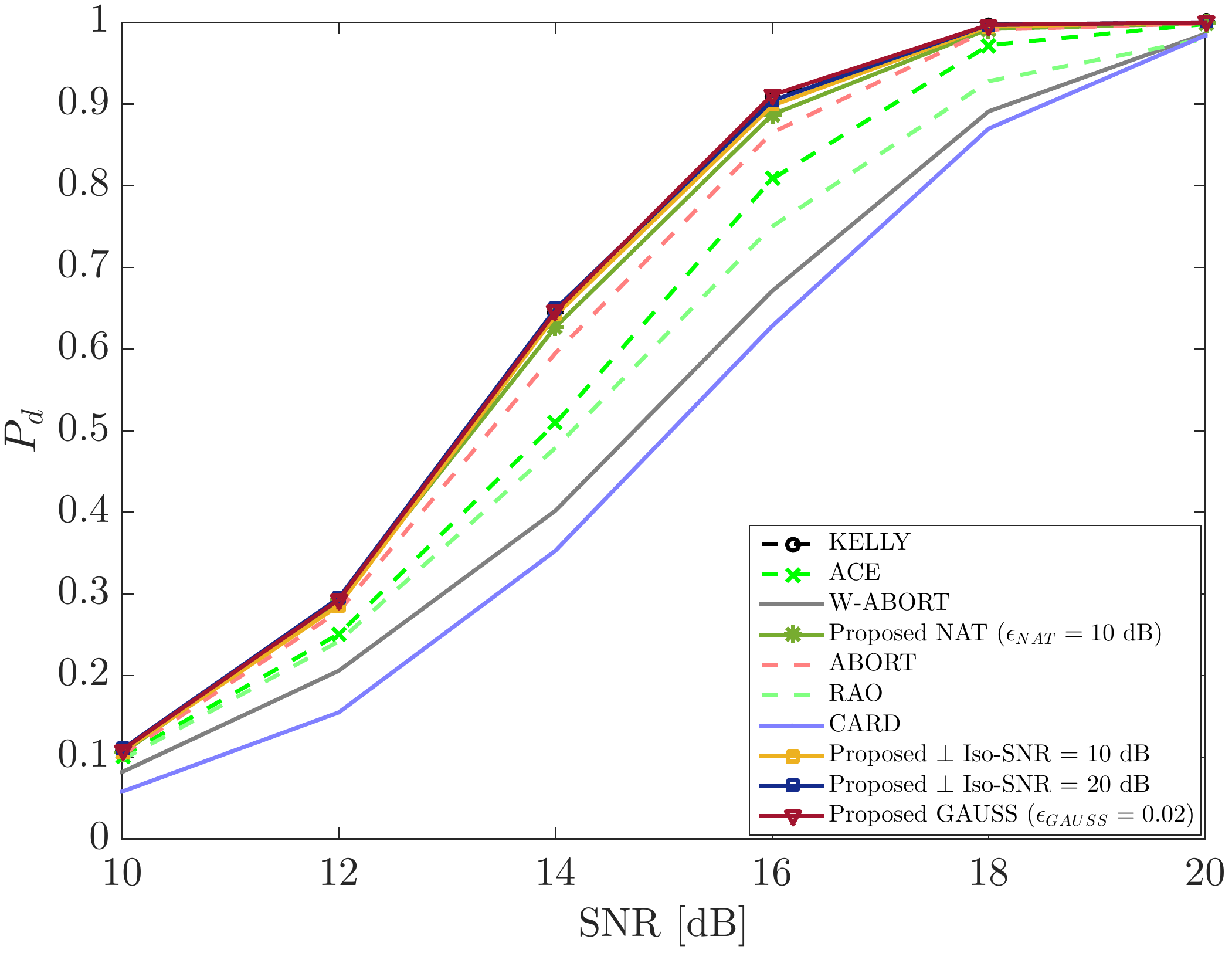}}  %
\subfigure[mismatched]{\includegraphics[width=8cm]{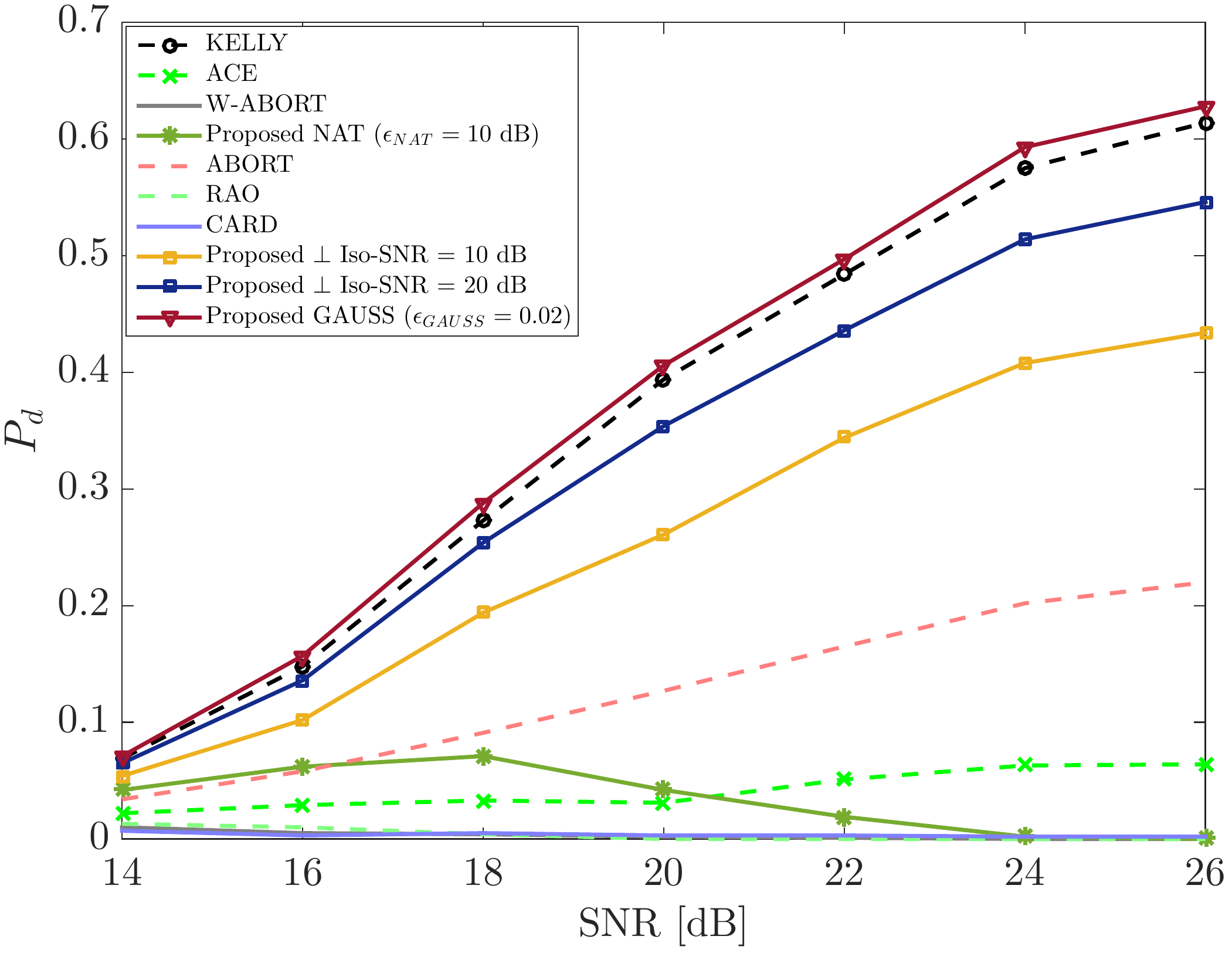}}  %
\caption{Comparison among selective receivers including the proposed detectors NAT (for $\epsilon_\text{\tiny NAT}=10$ dB) and $\perp$iso-SNR for two values of the reference SNR used in the design.}\label{fig:selective}
\end{figure}

We assess the performance by Monte Carlo simulation with $100/P_{fa}$ independent trials to set the thresholds, $P_{fa}=10^{-4}$,
and $10^3$ independent trials to compute the $P_ d$s. We consider $N=16$ and $K=32$, all the other parameters are set as in Sec. \ref{sec:interpretation}.
Results under (a) matched and (b) mismatched conditions are reported as subfigures in Figs. \ref{fig:robust}-\ref{fig:selective}, for robust and selective receivers, respectively.

Regarding detectors with robust behavior, what is interesting is that it is possible to obtain practically the same performance of some well-known receivers although the latter were obtained by a radically different design approach (typically a GLRT) compared to the CFAR-FP framework proposed here. For instance from Fig. \ref{fig:robust} it is apparent that the performance of NAT with $\epsilon_\text{\tiny NAT}=20$ dB is practically identical to AMF (under both matched and mismatched conditions); similarly, the Iso-SNR ($10$ dB) behaves pretty much as the CAD, i.e., compared to Kelly's detector it is more robust but experiences a larger loss under matched conditions (yet less than $2$ dB, while e.g. the loss of ED is $5$ dB). In Fig. \ref{fig:robust}(a) it is also apparent that the proposed QUAD has the same $P_d$ of Kelly's detector (except for a very small loss for lower SNRs) while it is more robust, with performance under mismatched conditions that are in between AMF and Kelly's detector, somewhat close to the behavior of Kalson's detector. These results are also in perfect agreement with the analysis of the decision region boundary in the CFAR-FP performed earlier; in particular by comparing Fig. \ref{fig:regioni_proposed} with Fig. \ref{fig:robustiN16K32} it is easy to realize the closeness of some of the curves, which thus imply almost identical performance (under both matched and mismatched conditions).

As regarding selective detectors, we notice that the proposed GAUSS performs very close to Kelly's detector. A remarkable result is instead achieved by the proposed NAT with  $\epsilon_\text{\tiny NAT}=10$ dB, which guarantees the same $P_d$ of Kelly's detector (under matched conditions) but at the same time very strong selectivity. This behavior has no counterpart in any existing detector, since a more or less significant loss under matched conditions (compared to Kelly's detector) is observed in all well-known selective receivers, as visible in Fig. \ref{fig:selective}(b) by comparison with ACE, ABORT/W-ABORT, CARD and RAO.
From Fig. \ref{fig:selective}(a) we see that also the $\perp$iso-SNR (with both considered values of SNR) has practically no loss under matched conditions with the same or a bit stronger selectivity (compared to Kelly's detector), ref. Fig. \ref{fig:selective}(b). Indeed, by looking again at Fig. \ref{fig:regioni_proposed}, it can be immediately visualized that the decision region boundary of the NAT with  $\epsilon_\text{\tiny NAT}=10$ dB is rather peculiar with respect to the other detectors, and apparently more suitable to isolate the matched $H_1$ cluster (with $\cos^2\theta=1$) from all the other clusters ($H_0$ but also $H_1$ under mismatches) while achieving the same $P_d$ under matched conditions probably due to the fact that it turns almost horizontal to avoid including too much $H_0$ points in the upper-most side of the boundary.  

\section{Conclusion}\label{sec:conclusions}

This paper addressed the analysis and design of radar detectors that can be represented in 
the CFAR feature plane given by the maximal invariant $(\beta,\tilde{t})$.
To this end, data clusters have been analytically described in such a plane  in terms of trajectories and shapes  under both $H_0$ and matched/mismatched steering vector  $H_1$ hypotheses. Obtained results allow one to understand the performance   of a detector (under matched conditions, but also its robustness or selectivity to mismatches) starting from the form of its decision region boundary. 

This new framework not only sheds new light on the properties of existing detectors, but also enables the design of novel linear and non-linear detectors. Several criteria have been provided to this aim, but clearly much more are possible. 
The performance assessment already showed however that through the CFAR-FP framework it is possible to achieve the same performance of  well-known receivers (obtained by a radically different design approach, typically a GLRT) as well as remarkable behaviors that have no counterpart in classical detectors; in particular, for the first time a selective detector has been obtained which exhibits no $P_d$ loss under matched conditions and, at the same time, the very strong rejection capabilities of state-of-the-art selective detectors (namely ABORT/W-ABORT) that, conversely, experience a significant loss  under matched conditions. 

As future work we will continue to investigate the design (and analysis) of detectors through the proposed framework, which by design  always guarantees (the chosen $P_{fa}$ and) the CFAR property, with the aim of further exploring the palette of achievable robust or selective behaviors while keeping under control the performance under matched conditions.

\appendices

\section{Proof of Proposition \ref{teo:cluster}}\label{app:A}

We recall the general characterization of $(\beta,\tilde{t})$ parameterized in $\gamma$ and $\cos^2 \theta$, which encompasses the one under $H_0$ (for $\gamma=0$) and $H_1$ under matched conditions (for $\cos^2 \theta=1$): $\tilde{t}$ given $\beta$ is ruled by a complex noncentral F-distribution with $1$ and $K-N+1$ complex degrees of freedom and noncentrality parameter $\gamma \beta \cos^2\theta$, i.e., $\tilde{t}\sim \mathcal{CF}_{1,K-N+1}(\gamma \beta \cos^2\theta)$; $\beta$ is ruled by a complex noncentral Beta distribution with $K-N+2$ and $N-1$ complex degrees of freedom and noncentrality parameter $\gamma (1-\cos^2\theta)$, i.e., $\beta\sim \mathcal{C\beta}_{K-N+2,N-1}(\gamma (1-\cos^2\theta))$. %

Notice that: $\tilde{t}\sim \mathcal{CF}_{1,K-N+1}(\gamma \beta \cos^2\theta)$ given $\beta$ can be recast as $\frac{1}{K-N+1}$ times a real noncentral F-distribution with $2$ and $2(K-N+1)$ (real) degrees of freedom and noncentrality parameter $2\gamma \beta \cos^2\theta$;  $\beta\sim \mathcal{C\beta}_{K-N+2,N-1}(\gamma (1-\cos^2\theta))$ can be recast as a proper flipped real Beta, i.e., 1 minus a real noncentral Beta distribution with parameters $N-1$ and $K-N+2$ and noncentrality parameter $2\gamma (1-\cos^2\theta)$ \cite{handbook}.

To describe the clusters, it is sufficient to find their center and shape, which are linked to the mean and covariance matrix of the vector $\bm{\xi}=[\beta\ \tilde{t}]^T$; these can be derived by starting from the formulas of the (generally noncentral) real F and Beta distributions in \cite{handbook} and using the relationships with the complex equivalents above.
For the first moment we have
\begin{equation}
\E[\tilde{t} | \beta] = \frac{1}{K-N} (1+\gamma \beta \cos^2\theta). \label{eq:cond_mean}
\end{equation}
By the theorem of iterated mean we obtain
$$
\E[\tilde{t}] = \E[\E[\tilde{t} | \beta]] = \frac{1}{K-N} (1+\gamma \mu_\beta \cos^2\theta)
$$
where
\begin{align*}
\mu_\beta = \E[\beta] &= 1 - \e^{-\gamma \sin^2\theta} \frac{\Gamma(N)\Gamma(K+1)}{\Gamma(N-1)\Gamma(K+2)}\\
&  \times{}_2F_2(K+1, N; N-1, K+2; \gamma\sin^2\theta)
\end{align*}
which can be easily recast into $\eqref{eq:mu_beta}$, also using the property of Euler's Gamma function $\Gamma(n)=(n-1)\Gamma(n-1)=(n-1)!$.
The calculation of $\VAR[\beta]$ in eq. \eqref{eq:var_beta} proceeds similarly by using the formula for the variance 
of a real Beta distribution \cite{handbook} and the well-known identity $\VAR[\beta]=\E[\beta^2]-(\E[\beta])^2$.  

The expression of $\VAR[\tilde{t}]$ requires a more articulated procedure; by the law of total variance we have that 
\begin{align*}
\VAR[\tilde{t}] &= \E[\VAR[\tilde{t} | \beta]] + \VAR[\E[\tilde{t}|\beta]]\\
& = \frac{1}{(K-N)^2 (K-N-1)} \Big[ (\gamma\cos^2\theta)^2 \E[\beta^2] \\
&+ (1\! +\! 2\gamma\mu_\beta \cos^2\theta)(K\! -\! N\! +\! 1)\Big] + \frac{(\gamma\cos^2\theta)^2 \VAR[\beta]}{(K-N)^2}
\end{align*}
where we have used the formulas for the mean and variance of the real F-distribution; eq. \eqref{eq:var_t_tilde} is finally obtained by using  $\E[\beta^2]= \sigma_\beta^2 + \mu_\beta^2$.

The last term to be computed is the correlation coefficient $\rho$; from the well-known identity
$
\E[(\tilde{t}-\mu_{\tilde{t}} )(\beta-\mu_\beta)] = \E[\tilde{t}\beta]-\mu_{\tilde{t}} \mu_\beta
$ 
and using also eq. \eqref{eq:cond_mean}, we obtain
\begin{align*}
\E[\tilde{t}\beta] &= \E[\E[\tilde{t}\beta | \beta]] = \E\left[ \beta \frac{1}{K-N} (1+\gamma \beta \cos^2\theta) \right]\\
&=\frac{\mu_\beta}{K-N} + \frac{1}{K-N} \gamma \E[\beta^2] \cos^2\theta
\end{align*}
from which the final expression \eqref{eq:rho} is readily obtained.

\section{Proof of Proposition \ref{teo:linear_migration}}\label{app:B}

We start by noticing that  eq. \eqref{eq:mu_beta} can be rewritten as
\begin{align}
& \mu_\beta = 1 - \frac{N-1}{K+1} \e^{-x} \nonumber \\
&\times \sum_{n=0}^\infty \frac{\Gamma(K+1+n) \Gamma(N+n) \Gamma(N-1) \Gamma(K+2)}{\Gamma(K+1)\Gamma(N) \Gamma(N-1+n) \Gamma(K+2+n)} \frac{x^n}{n!} \nonumber\\
& = 1- \e^{-x} \sum_{n=0}^\infty a_n \frac{x^n}{n!} \label{eq:mu_beta_series}
\end{align}
where $x=\gamma \sin^2 \theta$ and $a_n = \frac{N-1+n}{K+1+n}$.
The series on the righthand side resembles pretty much the Taylor's expansion of $\e^x$; in particular, adding the same number $n$ to numerator and denominator will make the resultant fraction closer to 1 than the starting fraction $\frac{N-1}{K+1}\in(0,1)$, which is already typically closer to 1 than to 0 because it is desirable to keep $K$ not much larger than $N$ (a common case if $K=2N$ or a bit less, since it is not easy to obtain a large number of secondary data in real applications).
This of course applies to any term of the progression ($a_n$ is obtained by adding one to the numerator and denominator of $a_{n-1}$), making it quickly convergent to 1; thus, we can state that the sum of the series deviates from  $\e^x$ mainly because of a limited number of lower $a_n$ terms. More precisely, it is a simple matter to show that the series above, seen as a function of $x$, lies in a quite narrow stripe between $\e^x$ and a slightly rescaled version of its, i.e.
$
\frac{N-1}{K+1} \e^x < \sum_{n=0}^\infty a_n \frac{x^n}{n!} < \e^x.
$
Hence,  one expects a weak dependency of $\mu_\beta$ on $x$ (hence ultimately on  $\cos^2\theta$ as well as $\gamma$), which from  \eqref{eq:mu_t} implies that the parametric curve $(\mu_\beta(\cos^2\theta), \mu_{\tilde{t}}(\cos^2\theta))$  should not deviate too much from a straight line. 

This intuition can be made more rigorous by explicitly computing the  sum of the series. First observe that
$$
\sum_{n=0}^\infty a_n \frac{x^n}{n!} = (N-1) \sum_{n=0}^\infty \frac{x^n}{(K+1+n) n!} + \sum_{n=0}^\infty \frac{n x^n}{(K+1+n) n!}
$$
and that the expansion of the lower incomplete Gamma is
$$
\gamma(s,x) = \int_0^x t^{s-1} \e^{-t} \mathrm{d}t =  \sum_{n=0}^\infty \frac{(-1)^n x^{s+n}}{(s+n) n!}.
$$
It immediately follows that
\begin{align*}
 \sum_{n=0}^\infty \frac{x^n}{(K+1+n) n!} & = (-x)^{-(K+1)} (-1)^{-(K+1)} \\
 & \times \sum_{n=0}^\infty \frac{(-x)^{n+K+1}}{(K+1+n) n!} (-1)^{n+K+1}\\
 & = (-x)^{-(K+1)} \gamma(K+1,-x).
\end{align*}
Moreover,
\begin{align*}
 \sum_{n=0}^\infty \frac{n x^n}{(K+1+n) n!} & =  \sum_{n=1}^\infty \frac{ x^n}{(K+1+n) (n-1)!}\\
 &= \sum_{m=0}^\infty \frac{ x^{m+1}}{(K+2+m) m!}\\
 &= - (-x)^{-(K+1)}\! \sum_{m=0}^\infty \! \frac{(-1)^m (-x)^{K+2+m}}{(K+2+m) m!}\\
 & = - (-x)^{-(K+1)} \gamma(K+2,-x)
\end{align*}
where we have used the change of index $m=n-1$. We get
$$
\sum_{n=0}^\infty a_n \frac{x^n}{n!} \!=\! (-x)^{-(K + 1)} \! \big[ (N\!-\! 1) \gamma(K\! +\! 1,-x) - \gamma(K\! +\! 2,-x) \big]
$$
which, from the recurrence relation $\gamma (s+1,x)=s\gamma (s,x)-x^{s}\mathrm {e} ^{-x}$,  can be rewritten as
$$
\sum_{n=0}^\infty a_n \frac{x^n}{n!} -  \e^x  = (N-K-2) (-x)^{-(K+1)} \gamma(K+1,-x).
$$
The latter expression quantifies more explicitly the closeness between the  series and $\e^x$, and substituted in \eqref{eq:mu_beta_series} finally yields
\begin{equation}
\mu_\beta= (K-N+2) (-x)^{-(K+1)} \e^{-x} \gamma(K+1,-x). \label{eq:mu_beta_bis}
\end{equation}
Now, since the Gamma function grows very fast and $K$ is a non-small value in practice, we can consider the asymptotic expansion  $\frac{\gamma(s,z)}{\Gamma(s)} \sim \frac{z^s \e^{-z}}{\Gamma(s+1)}$ \cite[eq. 8.11.5]{NIST}, which implies  that in \eqref{eq:mu_beta_bis}
$$
\gamma(K+1,-x) \sim \frac{\Gamma(K+1)}{\Gamma(K+2)} (-x)^{K+1} \e^x = \frac{(-x)^{K+1} \e^x}{K+1}
$$ 
so basically compensating the functional variation.
We conclude that the intuition of weak dependency of $\mu_\beta$ on $x$ is well-founded from an analytical point of view, hence the trajectory of $(\mu_\beta,\mu_{\tilde{t}})$ in the $\beta$-$\tilde{t}$ plane, as a function of either $\gamma$ or $\cos^2 \theta$,  can be approximated by a line.
For instance, to find the expression for fixed $\gamma$, it is thus sufficient to write the equation of the line passing through the points $(\mu_\beta|_{\cos^2\theta=0}, \mu_{\tilde{t}}|_{\cos^2\theta=0})$ and $(\mu_\beta|_{\cos^2\theta=1}, \mu_{\tilde{t}}|_{\cos^2\theta=1})$. From eqs. \eqref{eq:mu_beta}-\eqref{eq:mu_t} we obtain
$$
\frac{\beta - 1 + g(\gamma)}{1-\frac{N-1}{K+1} - 1 + g(\gamma)} = \frac{\tilde{t} - \frac{1}{K-N}}{\frac{1}{K-N} [1+\gamma(1-\frac{N-1}{K+1})] - \frac{1}{K-N}}
$$
where $g(\gamma)$ is defined in eq. \eqref{eq:g}, then $m$ and $q$ follow straight.
It turns out that the approximation is generally very good: it can be numerically verified that e.g. for $N=16$ and $K=32$ the relative error is always below $0.5\%$ in the whole range of $\gamma$ from 0 to 25 dB.

The equations for fixed $\cos^2\theta$ can be similarly obtained.


\begin{thebibliography}{99}


\bibitem{Ward}
J. Ward, ``Space-time adaptive processing for airborne radar,''
Lincoln Laboratory, MIT, Lexington, MA, Tech. Rep. No. 1015, Dec. 1994.

\bibitem{Klemm-STAP}
R. Klemm, {\em Principles of Space-Time Adaptive Processing}, IEE
Radar, Sonar, Navigation and Avionics Series 12, 2002.


\bibitem{Kelly}
E. J. Kelly, ``An Adaptive Detection Algorithm,'' {\em IEEE Trans. Aerosp. and Electron. Syst.}, Vol. 22, No.~2, pp.~115-127, Mar.~1986.


\bibitem{Kelly89}
E. J. Kelly, ``Performance of an Adaptive
Detection Algorithm; Rejection
of Unwanted Signals,''
{\em IEEE Trans. Aerosp. and Electron. Syst.},
Vol. 25, No.~2, pp.~122-133, Mar.~1989.

\bibitem{Kelly-Nitzberg}
F.~C.~Robey, D.~L.~Fuhrman, E.~J.~Kelly, and R.~Nitzberg,
``A CFAR Adaptive Matched Filter Detector,''
{\em IEEE Trans. Aerosp. and Electron. Syst.},
Vol.~29, No.~1, pp.~208-216, Jan.~1992.

\bibitem{Bose}
S. Bose and A. O. Steinhardt, ``A Maximal Invariant Framework for Adaptive Detection with Structured and Unstructured Covariance Matrices,'' {\em IEEE Trans. Signal Process.}, Vol. 43, No. 9, Sep. 1995

\bibitem{Pulsone-Rader}
N. B. Pulsone and C. M. Rader, ``Adaptive Beamformer Orthogonal
Rejection Test,'' {\em IEEE Trans. Signal Process.}, Vol.~49,
No.~3, pp.~521-529, Mar. 2001.

\bibitem{Bose2}
S. Bose and A. O. Steinhardt, ``Optimum Array Detector for a Weak Signal in Unknown Noise,'' {\em IEEE Trans. on Aerospace and Electronic Systems}, Vol. 32, No. 3, Jul. 1996

\bibitem{BOR-MorganClaypool}
F. Bandiera, D. Orlando, and G. Ricci,
``Advanced Radar Detection Schemes Under Mismatched Signal Models,''
{\em Synthesis Lectures on Signal Processing No. 8, Morgan \& Claypool Publishers},
2009.

\bibitem{Haykin}
S. Haykin: ``Neural Networks and Learning Machines'' (3rd ed.), Pearson Prentice Hall, 2009

\bibitem{deep_radar}
D. Brodeski, I. Bilik, R. Giryes, ``Deep radar detectors,'' \emph{IEEE Radar Conference}, Boston, MA, USA, 22-26 Apr. 2019.

\bibitem{Gini1}
M.S. Greco, F. Gini, P. Stinco, K. Bell, ``Cognitive radars: On the road to reality: Progress thus far and possibilities for the future,''
\emph{IEEE Signal Process. Magaz.}, Vol. 35, No. 4, 112--125

\bibitem{Gini2}
L. Wang, S. Fortunati, M.S. Greco, F. Gini, ``Reinforcement learning-based waveform optimization for MIMO multi-target detection,'' 
52nd Asilomar Conference on Signals, Systems, and Computers, 2018

\bibitem{arxivKNN}
A. Coluccia, A. Fascista, G. Ricci, ``A k-nearest neighbors approach to the design of radar detectors,'' submitted to \emph{Signal Processing}, arXiv:1908.00870 (2 Aug. 2019)


%
%
%
%


%
%
%
%

\bibitem{ACE}
S. Kraut and L. L. Scharf, ``The CFAR adaptive subspace detector is
a scale-invariant GLRT,'' {\em IEEE Trans. Signal Process.}, Vol.~47, No. 9, pp.~2538-2541, Sept. 1999.



\bibitem{Fabrizio-Farina}
G. A. Fabrizio, A. Farina, and M. D. Turley, ``Spatial Adaptive Subspace Detection in OTH Radar,''
{\em IEEE Trans. Aerosp. and Electron. Syst.},
Vol. 39, No. 4, pp. 1407-1427, Oct. 2003.

\bibitem{W-ABORT}
F.~Bandiera, O.~Besson, and G.~Ricci, ``An ABORT-Like Detector
With Improved Mismatched Signals Rejection Capabilities,''
{\em IEEE Trans. Signal Process.}, Vol. 56, No. 1, pp. 14-25, Jan. 2008.


\bibitem{Kalson}
S. Z. Kalson, ``An Adaptive Array Detector with Mismatched Signal Rejection,''
{\em IEEE Trans. Aerosp. Electron. Syst.}, Vol. 28, No. 1, pp. 195-207, Jan. 1992.

\bibitem{KWA} 
F. Bandiera, D. Orlando, and G. Ricci, ``One- and Two-Stage Tunable Receivers*'', \emph{IEEE Trans. Signal Process.}, Vol. 57, No. 8, Aug. 2009.

\bibitem{Rao}
A. De Maio, ``Rao Test for Adaptive Detection in Gaussian Interference With Unknown Covariance Matrix,'' \emph{IEEE Trans. Signal Process.}, Vol. 55, No. 7, pp. 3577-3584, July 2007.

\bibitem{Coni-SOC}
A. De Maio, ``Robust adaptive radar detection in the presence of
steering vector mismatches,'' {\em IEEE Trans. Aerosp. Electron. Syst.}, vol.
41, No. 4, pp. 1322-1337, Oct. 2005.

\bibitem{Bandiera-DeMaio-Ricci-CONI}
F. Bandiera, A. De Maio, and G. Ricci, ``Adaptive CFAR Radar Detection
With Conic Rejection,'' {\em IEEE Trans. Signal Process.},
Vol. 55, No. 6, pp. 2533-2541, Jun.~2006.

\bibitem{Besson1}
O. Besson, ``Detection of a Signal in Linear Subspace with
Bounded Mismatch,'' {\em IEEE Trans. Aerosp. Electron. Syst.}, Vol.
42, No. 3, pp. 1131-1139, Jul. 2006.


\bibitem{arxivROB}
A. Coluccia, G. Ricci, O. Besson, ``Design of robust radar detectors through random perturbation of the target signature,'' \emph{IEEE Trans. Signal Process.}, 19 August 2019.

\bibitem{Kelly_techrep}
E. J. Kelly and K. Forsythe, ``Adaptive Detection and Parameter
Estimation for Multidimensional Signal Models,'' Lincoln Laboratory, MIT,
Lexington, MA, Tech. Rep. No. 848, Apr. 19, 1989.


\bibitem{KellyTR2} 
E. J. Kelly, ``Adaptive detection in non-stationary interference-Part III,''
Lincoln Laboratory, MIT, Lexington, MA, Tech. Rep. No. 761, Aug. 1987.

\bibitem{Richmond} 
C. D. Richmond, ``Performance of the adaptive sidelobe blanker detection algorithm in homogeneous environments,'' \emph{IEEE Trans. Signal Process.}, Vol. 48, No. 5, pp. 1235-1247, May 2000.

\bibitem{abramowitz}
N. Abramowitz, and I. A. Stegun: ``Handbook of Mathematical Functions with Formulas, Graphs, and Mathematical Tables,'' \emph{New York: Dover}, 1964.

%
%
%
%
%
%
%
%
%
%
%
%
%
%

%
%
%
%
%
%
%
%
%
%
%


%

%
%
%
%
%
%
%

%
%
%
%
%
%
%

%
%
%
%

%

%
%


%
%
%
%
%



%
%
%
%
%

%
%
%

%
%
%
%
%
%
%

\bibitem{handbook}           
C. Walck: ``Hand-book on statistical distributions for experimentalists'', Internal Report SUF-PFY/96Ð01, University of Stockholm, 11 December 1996 (http://inspirehep.net/record/1389910/files/suf9601.pdf, last modification 10 Sep. 2007).

\bibitem{NIST}
F. W. J. Olver, A. B. Olde Daalhuis, D. W. Lozier, B. I. Schneider, R. F. Boisvert, C. W. Clark, B. R. Miller, and B. V. Saunders, eds., 
``NIST Digital Library of Mathematical Functions,'' (http://dlmf.nist.gov/, Release 1.0.23 of 2019-06-15). 





\end{thebibliography}
\end{document}